\newtcolorbox{mybox}[1]{%
    tikznode boxed title,
    enhanced,
    arc=0mm,
    interior style={white},
    attach boxed title to top center= {yshift=-\tcboxedtitleheight/2},
    fonttitle=\bfseries,
    colbacktitle=white,coltitle=black,
    boxed title style={size=normal,colframe=white,boxrule=0pt},
    title={#1}}
\newcommand*{\addFileDependency}[1]{
  \typeout{(#1)}
  \@addtofilelist{#1}
  \IfFileExists{#1}{}{\typeout{No file #1.}}
}
\newtheorem{theorem}{\textbf{Theorem}}
\newtheorem{claim}{Claim}
\newtheorem{lemma}[theorem]{\textbf{Lemma}}
\theoremstyle{definition}
\newtheorem{definition}{Definition}
\newcommand{\proname}[0]{$\textsf{pRFT}$}
\title{Towards Rational Consensus in Honest Majority}
\date{May 13, 2024}	
\author{
    \href{https://orcid.org/0000-0002-5662-0386}
    {\includegraphics[scale=0.06]{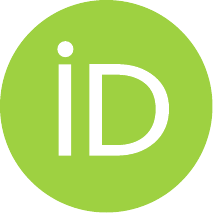}\hspace{1mm}Varul Srivastava} \\
	Machine Learning Lab\\
	IIIT Hyderabad\\
	\texttt{varul.srivastava@research.iiit.ac.in} \\
	\And
	\href{https://orcid.org/0000-0003-4634-7862}{\includegraphics[scale=0.06]{orcid.pdf}\hspace{1mm}Sujit Gujar} \\
	Machine Learning Lab\\
	IIIT Hyderabad\\
	\texttt{sujit.gujar@iiit.ac.in} \\
}
\begin{document}
\maketitle

\begin{abstract}
Distributed consensus protocols reach agreement among $n$ players in the presence of $f$ adversaries; different protocols support different values of $f$. Existing works study this problem for different adversary types (captured by threat models). There are three primary threat models: (i) Crash fault tolerance (CFT), (ii) Byzantine fault tolerance (BFT), and (iii) Rational fault tolerance (RFT), each more general than the previous. Agreement in repeated rounds on both (1) the proposed value in each round and (2) the ordering among agreed-upon values across multiple rounds is called Atomic BroadCast (ABC). ABC is more generalized than consensus and is employed in blockchains.

This work studies ABC under the RFT threat model. We consider $t$ byzantine and $k$ rational adversaries among $n$ players. 
We also study different types of rational players based on their utility towards (1) liveness attack, (2) censorship or (3) disagreement (forking attack). We study the problem of ABC under this general threat model in partially-synchronous networks. 
We show (1) ABC is impossible for $n/3< (t+k) <n/2$ if rational players prefer liveness or censorship attacks and (2) the consensus protocol proposed by Ranchal-Pedrosa and Gramoli cannot be generalized to solve ABC due to insecure Nash equilibrium (resulting in disagreement). For ABC in partially synchronous network settings, we propose a novel protocol \textsf{pRFT}(practical Rational Fault Tolerance). We show \textsf{pRFT} achieves ABC if (a) rational players prefer only disagreement attacks and (b) $t < \frac{n}{4}$ and $(t + k) < \frac{n}{2}$. In \textsf{pRFT}, we incorporate accountability (capturing deviating players) within the protocol by leveraging honest players. We also show that the message complexity of \textsf{pRFT} is at par with the best consensus protocols that guarantee accountability.
\end{abstract}

\keywords{distributed consensus, blockchains, security, equilibrium analysis}



\section{Introduction}
\label{sec:introduction}

\emph{Agreement and Distributed Consensus} is a well-studied problem since its introduction~\cite{Pease1980LSPInception,Lamport1982BGP} as the Byzantine Generals' Problem. Applications include maintaining distributed file systems, building fault-tolerant systems, and, most recently, in Blockchain technology. Consensus is reaching agreement on a common value $v$ among a set of players $n$ with $f$ faulty players. In case of repeated consensus, we require an additional condition that ordering among the agreed-upon values is the same across rounds. This generalization of agreement is called \emph{Atomic BroadCast} (ABC). 

Prior works achieved consensus in the presence of up to $t < n/3$ Byzantine failures~\cite{Lamport1982BGP,CastroLiskov1999pBFT} under synchronous network assumptions.
\citet{CastroLiskov1999pBFT} extended consensus to partially synchronous network through pBFT protocol. FLP Impossibility~\cite{Fischer1985ImpossibilityAsync} stated that agreement through a \emph{deterministic} protocol in asynchronous settings is impossible in the presence of even one faulty player. Later, randomized protocols were proposed~\cite{Cachin2000AsyncAgreement,Cachin2001AsyncBroadcast,Bracha1987AsyncBFT}, which achieve consensus in asynchronous network settings for $t < n/3$. Different consensus protocols work under different threat models. For instance, Paxos~\cite{Lamport1998Paxos,Lamport2001Paxos} and Raft~\cite{Ongaro2014Raft} achieve consensus in presence of $c$ \emph{crash fault} players (where players can go offline). This threat model is Crash Fault Tolerant -- $CFT(c)$. pBFT~\cite{CastroLiskov1999pBFT}, and Honeybadger~\cite{Miller2016Honeybadger} achieve consensus in the presence of $t<\frac{n}{3}$ byzantine faults (where the player can follow any arbitrary strategy). This is the Byzantine Fault Tolerant threat model -- $BFT(t)$. 

Ranchal-Pedrosa and Gramoli~\cite{Gramoli2022Trap} introduced a general rational threat model where $t$ byzantine players and $k$ rational players can collude. This is called a rational threat model -- $RFT(k,t)$ and the agreement problem called Rational Consensus (RC). The authors propose RC using \emph{baiting based protocol} -- TRAP by showing the existence of a Nash Equilibrium (NE) that achieves consensus for $t < n/3$ and $k + t < n/2$. Protocols are called \emph{Nash Incentive Compatible} (NIC) when the honest strategy is NE. However, we show the existence of another (more preferred) NE strategy that causes disagreement for TRAP when used to solve Atomic Broadcast (ABC). The rational players may prefer this dystopic equilibrium point over the more improbable secure equilibrium, making the protocol insecure. Game-theoretic security under the existence of multiple Nash equilibrium points is realized when following the protocol is Pareto-optimal/Focal equilibrium~\cite{Schelling1963SoC} \footnote{for details, refer to discussion in Section~\ref{ssec:eq-ic} or \cite{Schelling1963SoC}}. Protocols ensure stronger security guarantees if the equilibrium is Dominant strategy equilibrium (DSE) instead of Nash equilibrium (NE). 

There is an absence of protocols realizing ABC in the rational threat model. Our work addresses this gap and proves impossibilities and a novel protocol \proname\ that achieves ABC in the rational threat model under certain conditions on rational players' utility.

\subsection*{Our Approach} 
This work generalizes the $RFT(k,t)$ model~\cite{Gramoli2022Trap}: (i) to incorporate payoff in repeated rounds with discounting and (ii) to model rational different agent types. These rational player types are depicted by $\theta$: (1) $\theta = 1$ is incentivized towards disagreement, (2) $\theta = 2$ is incentivized towards censorship attack, and (3) $\theta = 3$ is incentivized towards denial of service (liveness is compromised). We show that for rational players types $\theta=2,3$, achieving Rational Consensus (RC) is not possible under the $RFT(t,k)$ threat model with $\frac{n}{3} \leq t + k < \frac{n}{2}$. 
Hence, we focus on rational players of type $\theta=1$. We propose \proname, which achieves consensus in $RFT(t,k)$ threat model for $\frac{n}{3} \leq t + k < \frac{n}{2}$ when rational players are of type\footnote{for details on different types of rational players, refer to Section~\ref{sssec:players}} $\theta=1$. Previous protocols attempted to incentive engineer the protocol such that rational players are incentivized to ``bait'' the deviating collusion. The rational players are incentivized to bait. The baiting is an equilibrium for the rational players if a certain threshold number ($m$) of players bait, leading to the protocol's security. However, this method was susceptible to the existence of alternate, insecure equilibrium points.

We capture deviation without relying on rational players (by incorporating accountability within the protocol), guaranteeing the capture of deviating players, making them prone to penalty. In our proposal, each player deposits some collateral. If deviation is captured, deviating players lose the collateral, which a rational player does not want. Hence, it will deviate if its collateral is intact. It leads to following the protocol as a dominant strategy for all rational players. Thus, our protocol achieves a stronger game-theoretic security guarantee, namely, Dominant Strategy Incentive Compatibility (DSIC). Based on these results, we place our work (coloured blue) in Table~\ref{tab:performance-thresholds} among other known bounds for consensus in different network settings and threat models. We also show that \proname\ achieves message complexity and message size, at par with the best available message complexity amongst protocols that guarantee \emph{accountability} such as~\cite{Ranchal2020ZLBAccountable,Civit2021Polygraph} and \proname\ works under a more general threat model than these protocols. 

\definecolor{Gray}{gray}{0.9}
\begin{table}[!t]
\centering
\begin{small}
\begin{tabular}{ p{10em} | p{10em}  p{10em} p{12em}}
\toprule
\multirow{2}{*}{\textbf{Network}} & \multicolumn{3}{c}{\textbf{Threat Model}}\\
 & $CFT(c)$ & $BFT(t)$ & $RFT(t,k)$ \\
\midrule
\rowcolor{Gray}
Synchronous & $2c < n$~\cite{Lamport1998Paxos} & $2t < n$~\cite{Pease1980LSPInception} & $t < \frac{n}{2}, k < \frac{n}{2}$~\cite{Pease1980LSPInception}\\
Partially-synchronous & $2c < n$~\cite{Lamport1998Paxos} & $3t < n$~\cite{CastroLiskov1999pBFT}  & \color{blue}$t < \frac{n}{4}, t + k < \frac{n}{2}$\\
\rowcolor{Gray}
Asynchronous & $c < \frac{n}{3}$~\cite{Bracha1987AsyncBFT} & $t < \frac{n}{3}$~\cite{Bracha1987AsyncBFT}  & $t < \frac{n}{3}$~\cite{Bracha1987AsyncBFT}\\ 
\bottomrule
\end{tabular}\\
The results highlighted in \textcolor{blue}{blue} are contributions of our work.
\smallskip
\caption{Bounds for consensus in different threat models.}
\label{tab:performance-thresholds}
\end{small}
\end{table}
\subsection{Our Contributions}
\label{ssec:our-contributions}
In this paper, we first extend the Rational Threat Model proposed in~\cite{Gramoli2022Trap}. We classify the rational players into three types represented by different values of $\theta$ (representing player types). $\theta=3$ is when rational players are incentivized to compromise liveness and cause censorship or disagreement. $\theta=2$ is when rational players are incentivized only to cause censorship or disagreement, and $\theta=1$ is when players are incentivized only to cause disagreement. Based on this for $k$ rational, $t$ byzantine players such that $t < t_{0}$ and total players are $n$, we present the following impossibilities in Section~\ref{sec:impossibilities}.
\begin{itemize}[leftmargin=8pt]
    \item[-] consensus is not possible when the set of rational players are of type $\theta=3$ for $k + t < n/2$ and $t_{0} < n/3$ in partially synchronous and asynchronous settings (Theorem~\ref{thm:r3-impossible}).
    \item[-] consensus is not possible when the set of rational players are of type $\theta=2$ for $k + t < n/2$ and $t < n/3$ in partially synchronous and asynchronous settings (Theorem~\ref{thm:r2-impossible}).
    \item[-] There exists an additional Nash equilibrium that results in disagreement in baiting-based consensus protocols (such as TRAP, proposed in~\cite{Gramoli2022Trap}) (Theorem~\ref{thm:trap-impossible}). Thus, there is a need for a new agreement protocol with one equilibrium point (preferably guaranteeing a stronger notion of Dominant Strategy Equilibrium instead of Nash Equilibrium\footnote{for details regarding notions of equilibrium points, see Section~\ref{ssec:eq-ic}}).
\end{itemize}

Following this, we propose a novel protocol \proname\ (Section~\ref{sec:our-protocol}) which achieves consensus in $RFT(k,t)$ threat model. We show the following results for \proname.
\begin{itemize}[leftmargin=8pt]
    \item[-] \proname\ achieves consensus with $k + t < n/2$ and $t < n/4$ when rational players are of type $\theta = 1$. 
    \item[-] \proname\ guarantees correctness (Dominant Strategy Equilibrium) and liveness in synchronous and partially synchronous network settings. 
    \item[-] We show that \proname\ achieves optimal message complexity among consensus protocols that provide accountability\footnote{see Section~\ref{sssec:proof-of-fraud} for a formal definition of accountable consensus protocols}. 
\end{itemize}

\section{Related Work}
\label{sec:related-work}
The domain of distributed consensus has had extensive research in the past $50$ years. We discuss below the work which is closely related to our work.

\paragraph{Byzantine Agreement.} The inception of Byzantine Consensus saw the formulation of protocols under synchronous network settings~\cite{Pease1980LSPInception,Lamport1982BGP}. This foundational work was subsequently extended to encompass partially synchronous scenarios~\cite{Cynthia1988PartialSynchrony}. In the context of an asynchronous network, Fischer et al.~\cite{Fischer1985ImpossibilityAsync} introduced the FLP impossibility by which it is impossible to reach an agreement using a \emph{deterministic} protocol in the asynchronous network in the presence of even one faulty process. To overcome this, randomized protocols for distributed agreement and broadcast~\cite{Cachin2000AsyncAgreement,Cachin2001AsyncBroadcast} in asynchronous settings were introduced. Blockchain technology introduced through Nakamoto's seminal whitepaper~\cite{nakamoto2009Bitcoin} solves the State Machine Replication (SMR) using alternate protocols such as Proof-of-Work (PoW) and Proof-of-Stake (PoS) for consensus in public settings and BFT based Atomic Broadcast/Agreement (ABA) in the permissioned setting.

\paragraph{Atomic BroadCast (ABA)} Protocols achieving ABA were introduced through Paxos~\cite{Lamport1998Paxos,Lamport2001Paxos} and Raft~\cite{Ongaro2014Raft} which assumed a conservative \emph{crash-fault} threat model and synchronous network assumptions. Deterministic protocols as pBFT~\cite{CastroLiskov1999pBFT}, Hotstuff~\cite{Abraham2020SyncHotstuff,Yin2019Hotstuff}, FlexibleBFT~\cite{Malkhi2019FlexibleBFT} and others~\cite{Civit2021Polygraph,Aublin2013RBFT} achieved ABA in byzantine threat model. However, these protocols worked in synchronous and partially synchronous networks. Randomized ABA protocols such as Honeybadger~\cite{Miller2016Honeybadger}, Tardigrade~\cite{blum2021tardigrade} and others~\cite{Gao2022Async2,Lu2022Async4,Micali2017Algorand,Spiegelman2022Async3,zhang2022Async1} achieve agreement even in asynchronous settings. The threat model used by these protocols is the byzantine threat model, which contains $t$ byzantine adversaries. While in synchronous network agreement protocols~\cite{blum2021tardigrade,Abraham2020SyncHotstuff} tolerate $n > 2f$, in partially-synchronous and asynchronous network agreement protocols~\cite{blum2021tardigrade,Miller2016Honeybadger} tolerate $n > 3f$. 

\paragraph{Rational Agreement} The Byzantine threat model used to analyze distributed cryptographic protocols was extended by adding rational players. Aiyer et al.~\cite{Aiyer2005BAR} introduced the BAR (Byzantine Altruistic Rational) framework, where players/processes are altruistic, byzantine and rational. Rational players deviate only if utility from deviating is more than following the protocol honestly. BAR model has been since used to solve the distributed cryptographic problem called Transfer Problem~\cite{Vilacca2011BARIntro,Fernandez2011BARPODS}, which they solve when the producer of $N$ processors is such that $N > 2f$ for $f$ (byzantine) faulty processes. Other works~\cite{Abraham2006RationalMPC,Asharov2011RationalCrypto,Badertscher2021BitcoinRPD51,Garay2013RPD,Ganesh2022RationalCrypto} analyze the security of distributed cryptographic and game theoretic protocols against rational threat model. Some distributed protocols implement trusted third-party mediators using \emph{cheap talks}. This process of realizing cheap talks secure against $k$ rational and $t$ byzantine adversaries for $k + 2t < n$ in synchronous settings~\cite{Abraham2006RationalMPC} and $3(k + t) < n$ in asynchronous settings~\cite{Abraham2019AsyncCheapTalks} similar to consensus protocols under byzantine adversaries~\cite{Katz2006thresholds,blum2021tardigrade}. \citet{katz2012RC} proposed analysis of agreement in the presence of two types of players --- honest and rational adversaries. \citet{Gramoli2022Trap} proposed TRAP protocol, which achieves rational consensus in the partially-synchronous and asynchronous network when $2(k + t) < n$ and $3t < n$. However, their result on the sufficiency of TRAP in achieving RC relies on rational players opting for an optimistic (but less plausible) equilibrium point instead of a dystopic (but more realistic) equilibrium, as demonstrated in this work. To our knowledge, the threat model discussed in~\cite{Gramoli2022Trap} is the most general present in literature, where $t$ byzantine and $k$ rational players exist, allowing arbitrary collusion among them. Our work extends this model by generalizing the type of rational player and presenting various exciting results in this setting. 

\section{Preliminaries}
\label{sec:prelims}

In this section, we motivate some definitions and prior works which are relevant to our work. 

\subsection{Blockchain and Agreement}
\label{ssec:blockchains}
Blockchain technology achieves \emph{ Atomic Broadcast} (ABC) (formally defined in Section~\ref{ssec:prelims-consensus}), i.e. repeated consensus while preserving the total ordering of agreed-upon values. The probabilistic agreement is reached in permissionless settings using protocols such as PoW~\cite{nakamoto2009Bitcoin} and PoS~\cite{Micali2017Algorand}. Permissioned blockchains use BFT type of consensus where a committee (comprised of a set of players $\mathcal{P} = \{\mathcal{P}_{1},\mathcal{P}_{2},\ldots,\mathcal{P}_{n}\}$) 
proposes and achieves agreement on a single value in each round. In the context of Blockchain, the agreed-upon value is a Block. The block comprises \emph{state-changes} to the global state of the system in the form of transactions. Each block $B$ has a set of transactions $\overline{tx}$ and points to the parent block, i.e. the block agreed upon immediately before it. 

\smallskip \noindent The result is a chain of agreed-upon blocks, represented by $C$. Each player $\mathcal{P}_{i}$ has their own version of this chain, represented by $C_{i}$. 
\cite{CastroLiskov1999pBFT,Miller2016Honeybadger} wait for all nodes to agree on the same value. Other solutions like~\cite{Micali2017Algorand} and \proname\ (our solution) partially confirm the blocks subject to rollbacks in case of adversarial behaviour.  These blocks are labelled \emph{Tentative Blocks}. Such blocks might be rolled back once the network synchronizes. They are considered finalized only if followed by a finalized block (defined in~\cite{Micali2017Algorand} as a block mined during a phase of synchrony). If the last \emph{final block} was mined $z$ blocks before the most recent block, then \emph{common-prefix property}~\cite{Garay2015Backbone} holds if $\cap_{i=1}^{n} C^{\lfloor z}_{i}$ (i.e. chain obtained by removing the $z$ most recent blocks in each player $\mathcal{P}_{i}$'s chain) is a prefix of all $C_{i}$.

\paragraph{Player Types} We briefly discuss the type of players that are a part of our discussion. The system consists of $n$ players, out of which $h$ are honest, $t$ are byzantine and $k$ are rational. 
\begin{itemize}[leftmargin=8pt]
    \item \textbf{Honest Players:} Also called altruistic players, they follow the specified protocol honestly. 
    \item \textbf{Byzantine Players:} They follow any strategy with the intent to cause disruption in the correctness, liveness or other properties guaranteed by the distributed protocol. They are immune to incentive manipulation and will choose a strategy irrespective of their \emph{payoff} from that strategy.
    \item \textbf{Rational Players:} These players follow the strategy which gives the highest \emph{payoff} based off of some utility structure (which is protocol and agent type specific\footnote{for more details on utility structure in our case, refer to Section~\ref{sssec:utility}}). Therefore, such players deviate from following the protocol honestly only if there exists a strategy with a higher payoff.  
\end{itemize}
\subsection{Flavours of Consensus}
\label{ssec:prelims-consensus}
The problem of consensus in a distributed setting was first motivated by the Byzantine General's problem~\cite{Lamport1982BGP} and has since been discussed in the literature in different capacities such as Byzantine Broadcast (BB), Byzantine Agreement~\cite{Pease1980LSPInception,Lamport1982BGP} (BA), Atomic Broadcast~\cite{Cristian1995ATOMICBF,blum2021tardigrade} (ABC) and Rational Consensus~\cite{Aiyer2005BAR,Gramoli2022Trap} (RC). In this section, we will discuss notions of consensus that serve as preliminaries to our work. We elaborate some of the more common definitions in Appendix~\ref{app:definitions}.

\subsubsection{Byzantine Broadcast \& Agreement}
Byzantine Generals' Problem was introduced by Lamport et al.~\cite{Pease1980LSPInception}. 
In the Byzantine Agreement problem, the system is comprised of $t$ faulty (byzantine) and $n - t$ non-faulty (honest/altruistic) players. We motivate the definition of BA from~\cite[Definition 2]{blum2021tardigrade}.

\subsubsection{Atomic Broadcast}
\emph{Atomic Broadcast }(ABC) is a generalization of BA. In BA, players reach an agreement on a value $v$ once, while in ABC, players reach this agreement multiple times, maintaining a ledger of agreed-upon values. ABC is therefore repeated rounds of agreement on values such that a ledger is maintained with an added constraint that the ordering of different values is the same for all honest players. The formal definition of ABC is motivated from~\cite[Definition 5]{blum2021tardigrade}.

\subsubsection{Rational Consensus}

With the increasing interest in the rational security analysis of distributed cryptographic protocols, we motivate from~\cite{Gramoli2022Trap} and define \textit{Rational Consensus} (RC) -- the equivalent of ABC with a general (byzantine and rational) threat model as follows. We motivate the definition of robustness from~\cite{Gramoli2022Trap} and extend it to repeated rounds by adding the condition of $c$\textsf{-strict ordering} (the rational equivalent of ABC). 
\begin{definition}[$(t,k)$\textsf{-robustness}]\label{def:rational-consensus}
    Consider a protocol $\Pi$ is run by $\mathcal{P} = \{\mathcal{P}_{1},\mathcal{P}_{2}\ldots,\mathcal{P}_{n}\}$ players where $t$ players are byzantine and $k$ players are rational 
    (follow the strategy with the highest incentive) while remaining $n - t - k$ players are honest. The protocol $\Pi$ is $(t,k)$\textsf{-robust} if it satisfies:
    \begin{itemize}[leftmargin=8pt]
        \item[-] $(t,k)$\textsf{-validity} If all altruistic players receive value $v$ then they all agree on value $v$
        \item[-] $(t,k)$\textsf{-agreement} All altruistic players output the same value in each round. 
        \item[-] $c-$\textsf{strict ordering} If the ledger of agreed blocks is $C_{1}$ and $C_{2}$ for two altruistic players with $|C_{1}| \leq |C_{2}|$, then $C_{1}^{\lfloor c} \subseteq C_{2}^{\lfloor c}$
        holds\footnote{$C^{\lfloor c}$ is the ledger after removing the last $c$ blocks}.
        \item[-] $(t,k)$\textsf{-eventual liveness} if an honest player outputs $B$ then all honest players output $B$ eventually.
    \end{itemize}
\end{definition}

\noindent We define a stronger notion of RC when protocols also satisfy censorship resistance (we define $(t,k)$\textsf{-censorship resistance}) and call such RC protocols as \textsf{strongly} $(t,k)$\textsf{-robust}. 
\begin{definition}[$(t,k)$\textsf{-censorship resistance}]
    A protocol $\Pi$ satisfies $(t,k)$\textsf{-censorship resistance} if when all honest players have transaction $tx$ as input, then eventually all honest players output a \textsf{block} with transaction $tx$.
\end{definition}

\begin{definition}[\textsf{strong} $(t,k)$\textsf{-robustness}]
    A protocol $\Pi$ is \textsf{strongly} $(t,k)$\textsf{-robust} if $\Pi$ is $(t,k)$\textsf{-robust} and $(t,k)$\textsf{-censorship resistant}. 
\end{definition}

\subsection{Cryptographic and Network Preliminaries}
\label{ssec:prelims-crypto}
\paragraph{Digital Signatures} We employ the use of digital signatures and assume unforgeability except with negligible probability by all players (players are polynomially bounded) with access to random oracle\footnote{such players represent the set of all Probabilistic Polynomial Time Turing Machines (PPTM)}. This use of PKI (Public Key Infrastructure) for unforgeable digital signature has been employed for Authenticated Byzantine Agreement, first introduced by Dolev \& Strong~\cite{Dolev1983AuthenticatedByzantine}.

\paragraph{Trusted Setup} Before initiation of the protocol, we assume there is a trusted broadcast-type setup similar to~\cite{Cachin2000AsyncAgreement} (implemented via a common third party) where all participating players share their public keys, against which any digitally signed message is verified. 

\paragraph{Network Settings} We assume reliable channels between each pair of players involved in our analysis. Therefore, messages cannot be lost or tampered with, but they can face network delays. Based on the dealy, we consider three types of networks: (1) \emph{synchronized} is when the delay is upper bounded by a known bound $\Delta$ which can be used to parameterize the protocol. (2) \emph{asynchronous} network does not have an upper bound on the delay, but the message eventually gets delivered (i.e. delay for each message is finite). (3) \emph{partially-synchronous}~\cite{Cynthia1988PartialSynchrony} network is when the system behaves as an asynchronous network till before an event called \emph{Global Stabilization Time} (GST), after which the system becomes synchronous with some upper bound on delay.

\subsection{Baiting based Consensus Protocols}
\label{ssec:trap}

Baiting-based consensus protocols such as~\cite{Gramoli2022Trap} assume collusion of $k + t$ players ($k$ rational and $t$ byzantine) deviating from the agreement protocol. They employ a baiting strategy to incentivize rational players to bait the collusion by submitting Proof-of-Fraud, which consists of $t_{0} + 1$ conflicting signatures (for a detailed discussion on proof-of-fraud see Section~\ref{sssec:proof-of-fraud} and~\cite{Civit2021Polygraph,Ranchal2020ZLBAccountable}). If $m$ rational players follow the baiting strategy, then one of them is randomly selected for the reward $\mathcal{R}$ associated with baiting. Each player has a deposit $\mathcal{L}$, which they lose if there is Proof-of-Fraud containing their conflicting signatures. Additionally, there is a net utility gain of $\mathcal{G}$ for the collusion $K \cup T$ if the system ends up in disagreement. This utility is equally divided between the set of rational players such that each player $P_{i} \in K$ gets $\frac{\mathcal{G}}{k}$ payoff. For a more extensive description of the system and results used in TRAP, refer to~\cite{Gramoli2022Trap}.

\section{Our Model}
\label{sec:model}
We model RC as a game between three types of players Byzantine, Rational and Altruistic (Honest) Players. In this section, we define (i) the game, (ii) the utility structure and  (iii) network models.  
\subsection{The Game}
\label{ssec:game}
The Game consists of a set of players $\mathcal{P} = \{\mathcal{P}_{1},\mathcal{P}_{2}\ldots,\mathcal{P}_{n}\}$ who maintain a ledger of \textsf{Blocks}. Each \textsf{Block} contains a set of transactions $\overline{tx} = (tx_{i})_{i=1}^{z}$ which are valid wrt. previously \emph{confirmed} blocks. Agreement on a single \textsf{Block} proceeds in discrete intervals called \emph{rounds}. In each round $r$ we have a leader $\mathcal{P}_{l}$ (for $l = 1 + (r\;\text{mod}\;n)$) that proposes a block $B_{r}$. 

\subsubsection{Players}
\label{sssec:players}
We have three types of players: Byzantine, Rational and Honest players.
\begin{itemize}[leftmargin=8pt]
    \item[-] \textbf{Byzantine:} There are $t$ byzantine players belonging to set $T \subset \mathcal{P}$. They follow any arbitrary strategy irrespective of payoff, with the goal of causing maximum disruption of the system. 
    \item[-] \textbf{Rational:} There are $k$ rational players belonging to the set $K \subset \mathcal{P}$ which follow the strategy that provides them maximum utility. They follow the protocol $\Pi$ honestly unless there exists a deviation that gives them more than \emph{negligible} advantage in utility. The rational players can be of one of four types, represented by $\theta \in \{0,1,2,3\}$.
    \item[-] \textbf{Honest:} There are $h = n - k - t$ Honest players belonging to the set $H \subseteq \mathcal{P}$. These players follow the protocol honestly as long as participation in the protocol is incentivized over abstaining from participation, otherwise, these players don't participate in the protocol. 
\end{itemize}
To allow a larger attack space for the adversary, we consider that the Rational and Byzantine players can collude with each other. Therefore, there can exist a collusion set $ \subseteq K \cup T$ of size $\leq k + t$. 

\paragraph{System States} Due to strategies followed by the players in the system and due to the external environment (network delays), a distributed system can be in the following states:

\begin{itemize}[leftmargin=8pt]
    \item \textbf{No Progress ($\sigma_{NP}$):} In any round $r (\forall r \in \mathbb{R})$ no new blocks are mined. 
    \item \textbf{Conditional Progress ($\sigma_{CP}$):} In any round $r (\forall r \in \mathbb{R})$ the confirmed blocks contain transactions such that $\forall tx_{i} \in \overline{tx}, tx_{i} \not\in Z$ where $Z$ is the set of \emph{censored transactions}.
    \item \textbf{Disagreement ($\sigma_{Fork}$):} In any round $r (\forall r \in \mathbb{R})$ we have two honest players $\mathcal{P}_{i},\mathcal{P}_{j} \in H$ such that their ledger state has two \emph{confirmed blocks} $B_{i}$ and $B_{j}$ at the same height $h$ and $B_{i} \neq B_{j}$.
    \item \textbf{Honest Execution ($\sigma_{0}$):} In any round, the protocol executes according to the honest execution and does not violate correctness or liveness conditions. 
\end{itemize} 

\begin{table}[!t]
\begin{small}
\centering
\begin{tabular}{ p{7em} | p{3em} p{3em} p{3em} p{3em} | p{12em}}
\toprule
\multirow{2}{*}{\textbf{Player Type ($\theta$)}} & \multicolumn{4}{c}{\textbf{System State ($\sigma$)}} & \multirow{2}{*}{\textbf{Preferred States}}\\
 & $\sigma_{NP}$ & $\sigma_{CP}$ & $\sigma_{Fork}$ & $\sigma_{0}$ & \\
\midrule
$\theta = 3$ & $\alpha$ & $\alpha$ & $\alpha$ & $0$ & No Progress, Censorship, Fork\\
$\theta = 2$ & $-\alpha$ & $\alpha$ & $\alpha$  & $0$ & Censorship, Fork\\
$\theta = 1$ & $-\alpha$ & $-\alpha$ & $\alpha$  & $0$ & Fork\\ 
$\theta = 0$ & $-\alpha$ & $-\alpha$ & $-\alpha$  & $0$ & Honest Execution\\
\bottomrule
\end{tabular}
\smallskip
\caption{Payoff function $f(\sigma,\theta)$}
\label{tab:func1}
\end{small}
\end{table}

\paragraph{Player Types} We further model rational player type\footnote{The notion of player type $\theta$ corresponds only to rational players because Byzantine type ($\theta = 3$) and Altruistic/Honest type ($\theta = 0$) is already fixed by definition.} through $\theta \in \{0,1,2,3\}$. The players' types depend on their incentives for different states of the distributed system. We characterize the function $f(\sigma,\theta)$ for payoff when player type is $\theta$ and the system is in state $\sigma$. This function is represented in Table~\ref{tab:func1} (for some positive constant $\alpha$). Types of rational players are described below:

\begin{itemize}[leftmargin=8pt]
    \item \textsf{$\theta = 3$:} Such players are incentivized if the system state is $\sigma_{NP}, \sigma_{CP} \text{ or } \sigma_{Fork}$. 
    \item \textsf{$\theta = 2$:} Such players are incentivized if system state is $\sigma_{CP} \text{ or } \sigma_{Fork}$.
    \item \textsf{$\theta = 1$:} Such players are incentivized if system state is $\sigma_{Fork}$.
    \item $\theta = 0$ This type of rational player is disincentivized if the system is in any state except $\sigma_{0}$. However, they might be incentivized against sending messages or performing verification of messages, which has been previously analysed by~\citet{Amoussou2020AAMASRCEquilibrium}. 
\end{itemize}
If there are multiple types of rational players, we analyze for security for the worst types amongst them. If $K_{i}$ is set of rational players with type $\theta = i$, we say, $K = \cup_{i=0}^{3}K_{i}$ is of type $\theta = \max \{i | K_{i} \neq \emptyset\}$. 

\subsubsection{Utility Structure}
\label{sssec:utility}
Byzantine players follow the strategy that causes maximal disruption to the protocol $\Pi$ irrespective of the associated utility. If the protocol is \emph{individually rational}, honest players follow $\Pi$ honestly. Therefore, we need only model the utility for the rational players $\mathcal{P}_{i} \in K$. 

\paragraph{Strategy Space} In addition to defining the possible states of the system and the possible types of a rational player, we define the set of strategies available with the rational players in each round. 
\begin{itemize}[leftmargin=8pt]
    \item \textbf{Abstain $\pi_{abs}$:} $\mathcal{P}_{i}$ does not send messages in the particular phase or round.
    \item \textbf{Double-Sign $\pi_{ds}$:} $\mathcal{P}_{i}$ signs on two conflicting messages in the same phase of the same round. 
    \item \textbf{Honest $\pi_{0}$:} $\mathcal{P}_{i}$ follows the specified protocol $\Pi$ honestly. 
\end{itemize}

\paragraph{Penalty} There also exists a penalty that is incurred by player $\mathcal{P}_{i}$ if there exists proof that with overwhelming probability, the player has deviated from the protocol. The penalty is a fixed constant $L$ for each player which is the collateral deposited by these players before participating in consensus. If proof of malicious behaviour is found, this deposit is stashed/burnt~\cite{Kiayias2020PoB} and the penalty mechanism should be such that for a player that has followed the protocol honestly the penalty should be $0$ except with negligible probability.\footnote{This condition ensures \emph{Individual Rationality} of Honest players} The penalty is determined through a function $D(\pi,\sigma)$ based on the mechanism which takes value $1$ if a penalty is incurred and $0$ otherwise.
 
Based on the strategy $\pi$ and type $\theta$, we define the utility of rational players in a round $r$ by taking an expectation over the set of possible states, i.e. $\sigma \in S$ as:  $u_{i}(\pi,\theta,r) =  \mathbb{E}_{\sigma\sim\;S}[f(\sigma,\theta)] - L\cdot\;D(\pi,\sigma)$

\if 0
\begin{equation}\label{eqn:utility}
    \begin{aligned}
    u_{i}(\pi,\theta,r) = & \mathbb{E}_{\sigma\sim\;S}[f(\sigma,\theta)] - L\cdot\;D(\pi,\sigma) 
    \end{aligned}
\end{equation} \fi

The function $f:S\times\{0,1,2,3\}\rightarrow\{-\alpha,0,\alpha\}$ (for some positive constant $\alpha$) is given in Table~\ref{tab:func1}. If we consider the expected utility in a particular round for player $\mathcal{P}_{i} \in K$ as $u_{i}(\pi,\theta,r)$, then the expected utility across rounds can be defined as:
\begin{equation}
    U_{i}(\pi,\theta) = \sum_{r=0}^{\infty}\delta^{r}u_{i}(\pi,\theta,r)
\end{equation}

\subsection{Threat Model}
\label{ssec:network-model}
We model threat via $\mathcal{M}$ where $|T| = t\le t_0 $ and $|K| = k$. Here $t_{0}$ is the upper-bound on byzantine players to ensure security against byzantine-only attacks. 

We make a simple observation under the threat model $\mathcal{M}$, for any $t_{0} \geq 1$, the necessary condition for any protocol $\Pi$ to reach agreement with the threshold $\tau \in [\lfloor \frac{n + t_{0}}{2} \rfloor + 1, n - t_{0}]$. This threshold is such that $n - t_{0}$ players should agree on a value for agreement. 

\begin{claim}\label{claim:byz-lim}
    A protocol $\Pi$ achieves consensus with agreement of at least $\tau$ players agreeing on the same value under threat model $\mathcal{M} := \langle(\mathcal{P},T,K),\theta,t_{0}\rangle$ only if $\tau \in [\lfloor \frac{n + t_{0}}{2} \rfloor + 1, n - t_{0}]$
\end{claim}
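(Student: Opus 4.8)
The plan is to prove the two endpoints of the interval separately, since the upper bound $\tau \le n - t_0$ is forced by \emph{liveness} (a decision must be reachable even when the byzantine players refuse to cooperate) while the lower bound $\tau \ge \lfloor \frac{n+t_0}{2}\rfloor + 1$ is forced by \emph{safety} (two disjoint quorums must not be able to confirm conflicting values). In both directions I argue by contraposition: assuming $\tau$ falls outside the claimed interval, I exhibit an admissible execution under the threat model $\mathcal{M}$ in which a property required by Definition~\ref{def:rational-consensus} is violated.

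For the upper bound, I would exploit that a byzantine player may simply abstain (strategy $\pi_{abs}$) and that the threat model permits up to $t_0$ such players. Suppose $\tau > n - t_0$. Consider the execution in which the $t_0$ byzantine players never send any message. Then at most $n - t_0 < \tau$ players can ever endorse any proposed block, so the endorsement threshold $\tau$ is never met and no block is confirmed in any round. This execution violates $(t,k)$\textsf{-eventual liveness}, so $\Pi$ fails to achieve consensus, and therefore $\tau \le n - t_0$ is necessary.

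For the lower bound I would run a partition / indistinguishability argument grounded in a counting observation. Suppose $\tau \le \lfloor\frac{n+t_0}{2}\rfloor$; since $\tau$ is an integer this is equivalent to $2\tau \le n + t_0$, i.e.\ $2(\tau - t_0) \le n - t_0$. Fix a byzantine set $B$ with $|B| = t_0$ and partition the remaining $n - t_0$ non-byzantine players into groups $A_1, A_2$ with $|A_i| \ge \tau - t_0$; this is feasible precisely because $2(\tau - t_0) \le n - t_0$. Then each of $A_1 \cup B$ and $A_2 \cup B$ has size $\ge \tau$, and these two quorums overlap only in the byzantine set $B$. Using the partially-synchronous network before GST, I delay all messages between $A_1$ and $A_2$ while the players in $B$ equivocate, endorsing a value $v_1$ toward $A_1$ and a conflicting value $v_2 \ne v_1$ toward $A_2$. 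By indistinguishability, the view of $A_1$ coincides with a legitimate execution in which $B$ is honest and $A_2$ is merely slow, so validity and liveness force $A_1$ to confirm $v_1$; symmetrically $A_2$ confirms $v_2$. Two honest players then hold distinct confirmed blocks at the same height, i.e.\ the system reaches the disagreement state $\sigma_{Fork}$, violating $(t,k)$\textsf{-agreement}. Hence $\tau \ge \lfloor\frac{n+t_0}{2}\rfloor + 1$ is necessary, and combining the two bounds yields the claim.

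The main obstacle I anticipate is making the indistinguishability step of the lower bound fully rigorous: I must verify that the transcript presented to each group $A_i$ is consistent with a genuine execution in which the protocol is obliged to decide, so that I may invoke the protocol's own liveness guarantee against it. This requires checking that the byzantine equivocation together with the message delay yields two locally valid transcripts and that each group reaches its decision threshold using only the $\ge \tau$ endorsements available within $A_i \cup B$. The upper bound is routine by comparison; the only bookkeeping there and in the partition construction is the integer arithmetic confirming $\tau \le \frac{n+t_0}{2} \iff \tau \le \lfloor\frac{n+t_0}{2}\rfloor$ and the existence of the split with $|A_i| \ge \tau - t_0$.
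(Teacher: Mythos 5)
Your proposal is correct and follows essentially the same route as the paper's proof: the upper bound $\tau \le n - t_0$ via byzantine abstention ($\pi_{abs}$) violating $(t,k)$\textsf{-eventual liveness}, and the lower bound via a network partition in which the byzantine set equivocates across two quorums that overlap only in $T$, violating $(t,k)$\textsf{-agreement}. Your version is marginally more careful than the paper's (explicit integer arithmetic for the split and an indistinguishability justification for why each partition must decide), but the decomposition and both key constructions are the same.
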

\begin{proof}
    We prove the contrapositive of this claim. Consider the two cases where $\tau > n - t_{0}$ and $\tau \leq \lfloor \frac{n + t_{0}}{2} \rfloor$. If $\tau > n - t_{0}$ then a message/vote from at least one byzantine player is required to reach a consensus. Under this situation, each byzantine player can play $\pi_{abs}$ which would compromise the $(t,k)$\textsf{-eventual liveness} property of the protocol $\Pi$. If $\tau \leq \lfloor \frac{n + t_{0}}{2} \rfloor$ then consider the existence of network partition such that two subsets of players $A$ and $B$ are unable to communicate with each other except through set of adversaries $T$. Here, $A \cup B = \mathcal{P}\setminus T$, $A \cap B = \emptyset$ and $|A| = |B| = \frac{n - t_{0}}{2}$. If the leader in some round is $\mathcal{P}_{l} \in T$ then leader proposes $v_{a}$ to $A$ and $v_{b}$ to $B$. Since $|A| + |T| \geq \lfloor \frac{n - t_{0}}{2}\rfloor + t_{0} \geq \tau$ and similarly $|B| + |T| \geq \lfloor \frac{n - t_{0}}{2} \rfloor + t_{0} \geq \tau$, both partitions reach consensus on conflicting values which invalidates the $(t,k)$\textsf{-agreement} property. Therefore, $\Pi$ is not $(t,k)$\textsf{-robust} in either case.
\end{proof}

\subsection{Equilibrium \& Incentive Compatibility}
\label{ssec:eq-ic}
\paragraph{Nash Equilibrium (NE)} While analysing the rational security of protocols, a protocol is considered secure if following the protocol honestly is a \emph{Nash Equilibrium} strategy. Therefore, the aim is to design a protocol that is Nash Incentive Compatible. This means that following the protocol honestly is the \emph{nash-equilibrium} strategy for all rational players. We define Nash Incentive Compatible protocol as follows:
\begin{definition}[Nash Incentive Compatible]
    A protocol $\Pi$ with set of $K$ rational players is \emph{Nash Incentive Compatible} (NIC) for a given utility structure $U$ if $\forall i\in K$ in the set of rational players $K$ following strategy $s_{i}$ following the honest strategy $\pi_{0}$ is Nash Equilibrium. i.e. $\forall i \in K, \forall \pi$ 
    \[
    U_{i}(s_{i} = \pi,\{s_{j} = \pi_{0}\}_{j \neq i}) \leq U_{i}(s_{i} = \pi_{0},\{s_{j} = \pi_{0}\}_{j \neq i}) 
    \]
\end{definition}

\paragraph{Focal Point} Given a game, it could have multiple equillibria. Amongst, multiple equillibria,  a particular equilibrium may attract more attention than other equllibria. Such equilibrium is often referred as \emph{focal point}~\cite{Schelling1963SoC}. Consider the following example game between three players $P1$ having strategy space $\{A,B\}$, $P2$ having $\{a,b\}$ and $P3$ having $\{\alpha,\beta\}$. The utility is as given in Table~\ref{tab:utility-example} given in the order $(U_{P1},U_{P2},U_{P3})$. 

\definecolor{Gray}{gray}{0.9}
\begin{table}[!th]
\centering
\begin{small}
\begin{tabular}{ p{2em} | p{4em}  p{4em} |  p{4em} p{4em}}
\toprule
\multirow{2}{*}{} & \multicolumn{2}{c}{a} & \multicolumn{2}{c}{b}\\
 & $\alpha$ & $\beta$ & $\alpha$ & $\beta$ \\
\midrule
\rowcolor{Gray}
$A$ & $(1,1,1)$ & $(1,1,0)$ & $(1,0,1)$ & $(-2,2,2)$ \\
$B$ & $(0,1,1)$ & $(1,-2,1)$ & $(2,2,-2)$ & $(0,0,0)$ \\
\bottomrule
\end{tabular}
\smallskip
\caption{Example of two equilibria in a $3$-player game (Here, utility is in the order $(A,a,\alpha)$)}
\label{tab:utility-example}
\end{small}
\end{table}

The game has two Nash equillibria, --- $(B,b,\beta)$ and  $(A,a,\alpha)$. The latter is attractive as it offers higher utility to all the players. Such focal points are important in analyzing a security game.

\paragraph{Challenges with multiple Nash Equilibria in a Security Game}
In case there are multiple NEs of a security game, if one of them implies security, does not imply security in general. The attackers would take the game towards an insecure equilibrium or rational players may play strategies resulting in equilibrium with higher utilities to them. Thus, we must explore all equlibiria and ensure security at the worst equilibrium. 

In our analysis, we are going to argue that in the generalized model in this paper,  there are multiple equilibria and baiting-based equilibrium~\cite{Gramoli2022Trap} that ensure the security of the protocol is one of them (similar to $(B,\beta,b)$ in Table~\ref{tab:utility-example}). We show in Theorem~\ref{thm:trap-impossible}  that there is another equilibrium (similar to $(A,\alpha,a)$) which may be more attractive to rational players. At the later equilibrium, the protocol is not secure. Thus, in the generalized model, TRAP~\cite{Gramoli2022Trap} need not be secure. 

\if 0
In this table, although points $(B,\beta,b)$ is an equilibrium point, a more stable equilibrium point is $(A,a,\alpha)$. In Ranchal-Pedrosa \& Gramoli's TRAP~\cite{Gramoli2022Trap} $(B,\beta,b)$ corresponds to baiting-based equilibrium point which ensures the security of the protocol, but by Theorem~\ref{thm:trap-impossible} we show the existence of another more insecure equilibrium point (analogous to $(A,a,\alpha)$ in our example). Therefore arguing against the solution proposed in~\cite{Gramoli2022Trap}. More details about the specifics of the protocol can be found in Section~\ref{sec:impossibilities}. In security games, our goal should be that following the protocol should such a stable Nash equilibrium. We consider an equilibrium point $X$ more stable than $Y$ if decrease in utility is lesser for an agent when other agents deviate from equilibrium point $X$ than $Y$. While the stability of eqilibrium points is not primary focus of this work, it motivates that in works like~\cite{Gramoli2022Trap}, while the proposed protocol is Nash incentive compatible, there exists a better equilibrium point that incentivizes rational players to deviate from the protocol. We formally prove the existence of this equilibrium point in Theorem~\ref{thm:trap-impossible}. 
\fi

\paragraph{Dominant Strategy Equilibrium (DSE).} A better equilibrium from the weaker \emph{stable Nash equilibrium} is a DSE. DSE equilibrium points are not contested by other equilibrium points. Thus, we can safely assume rational agents follow a DSE strategy, and the protocol is Dominant Strategy Incentive Compatible (defined below).

\begin{definition}[Dominant Strategy Incentive Compatible]
    A protocol $\Pi$ with a set of $K$ rational players is \emph{Dominant Strategy Incentive Compatible} (DSIC) for a given utility structure $U$ if $\forall i\in K$ following honest strategy $s_{i} = \pi_{0}$ is Nash Equilibrium. i.e. $\forall i \in K, \forall \pi,\forall s_{j}\forall j \in K/\{i\}$ 
    \[
    U_{i}(s_{i} = \pi,\{s_{j}\}_{j \neq i}) \leq U_{i}(s_{i} = \pi_{0},\{s_{j}\}_{j \neq i}) 
    \]
\end{definition}

\noindent We propose \proname\ in Section~\ref{sec:our-protocol} which is DSIC; therefore providing a better security guarantee.

\subsection{Impossibilities}
\label{sec:impossibilities}

From the discussion of the previous section, rational players can be of different types $\theta \in \{0,1,2,3\}$. We also know through Claim~\ref{claim:byz-lim} that any protocol requires $\tau \in [\lfloor \frac{n + t_{0}}{2}\rfloor + 1,n - t_{0}]$ for security against \emph{byzantine} attacks (Note that, this is necessary but not sufficient). We show through Theorem~\ref{thm:r3-impossible} that under a stricter (more adversarial) agent type for rational players, achieving RC for any $k + t > \frac{n}{3}$ is impossible.

\begin{theorem}[Rational Consensus under $\theta = 3$]\label{thm:r3-impossible}
    Under the threat model $\langle(\mathcal{P},T,K),\theta=3,t_{0}\rangle$ no rational consensus protocol is $(t,k)$\textsf{-robust} when $\lceil\frac{n}{3}\rceil \leq k + t \leq \lceil\frac{n}{2}\rceil - 1$.
\end{theorem}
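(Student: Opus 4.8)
The plan is to argue by contradiction. Suppose some protocol $\Pi$ were $(t,k)$\textsf{-robust} under $\langle(\mathcal{P},T,K),\theta=3,t_{0}\rangle$ with $\lceil n/3\rceil \le k+t \le \lceil n/2\rceil-1$, and write $g:=k+t$ for the size of the rational--byzantine coalition. By Claim~\ref{claim:byz-lim}, $\Pi$ must use an agreement threshold $\tau$ with $\lfloor(n+t_{0})/2\rfloor+1 \le \tau \le n-t_{0}$, so in particular $\tau>n/2$. The structural heart of the proof is the observation that a type-$\theta=3$ coalition of size $g\ge\lceil n/3\rceil$ can mount one of \emph{two} attacks depending on the magnitude of $\tau$, and that the two regimes \emph{together cover every admissible $\tau$}. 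I would therefore split on whether $\tau>\lfloor(n+g)/2\rfloor$ (the \emph{no-progress} regime) or $\tau\le\lfloor(n+g)/2\rfloor$ (the \emph{disagreement} regime). The coverage is forced by the hypothesis: $g\ge\lceil n/3\rceil$ gives $3g\ge n$, which is equivalent to $\lfloor(n+g)/2\rfloor\ge n-g$, so the two cases leave no gap.

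In the \textbf{no-progress regime} ($\tau>\lfloor(n+g)/2\rfloor$) I would have the entire coalition play $\pi_{abs}$ in every round. Since $\lfloor(n+g)/2\rfloor\ge n-g$, the assumption gives $\tau>n-g$, so the $h=n-g$ honest players can never assemble the $\tau$ votes needed to decide; the system is stuck in $\sigma_{NP}$ forever, and honest players holding a common input never agree, contradicting $(t,k)$\textsf{-validity} and $(t,k)$\textsf{-eventual liveness}. This deviation is trivially incentive compatible for $\theta=3$: abstention is indistinguishable from network delay, so $D(\pi_{abs},\cdot)=0$, while $\sigma_{NP}$ yields per-round payoff $\alpha>0$, strictly beating the honest payoff $f(\sigma_{0},3)=0$.

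In the \textbf{disagreement regime} ($\tau\le\lfloor(n+g)/2\rfloor$) I would run a partition/indistinguishability argument in the spirit of the classical $n\le 3f$ lower bound, adapted to the threshold $\tau$. Partition the $h=n-g$ honest players into groups $A$ (size $a$) and $C$ (size $c$); the constraints $a\le g$, $c\le g$, $a+g\ge\tau$, $c+g\ge\tau$ are simultaneously satisfiable exactly because $3g\ge n$ forces $a+c=n-g\le 2g$ and $\tau\le(n+g)/2$ forces $2\tau\le n+g$. I would then build two worlds: $E_A$, in which $C$ is silent/faulty and $A$ together with the coalition all hold $v_{a}$, so $A$ must output $v_{a}$; and the symmetric $E_C$ with a distinct value $v_{c}$. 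In the attack world $\gamma$ the coalition (the $t$ byzantine members equivocating for free, the $k$ rationals playing $\pi_{ds}$) feeds $v_{a}$-support to $A$ and $v_{c}$-support to $C$, while the pre-GST schedule delays every $A\leftrightarrow C$ message until after both groups have decided. Then $A$'s view is indistinguishable from $E_A$ and $C$'s from $E_C$, so $A$ outputs $v_{a}$ and $C$ outputs $v_{c}\neq v_{a}$, violating $(t,k)$\textsf{-agreement}.

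The step I expect to be the main obstacle is the incentive compatibility of the equivocation in the disagreement regime, since, unlike abstention, double-signing is in principle penalizable. I would resolve this by noting that a proof-of-fraud requires a single honest player to hold both conflicting signatures, whereas during the partition those signatures are confined to the non-communicating groups $A$ and $C$; hence $D(\pi_{ds},\sigma_{Fork})=0$ throughout the attack and each rational equivocator earns $\alpha$ per round at no penalty, and the disagreement is irreversible once the conflicting outputs appear. To absorb any penalty $L$ levied after the partition heals, I would let the adversary (who controls the pre-GST schedule in partial synchrony, and the finite-but-unbounded delays in the asynchronous case) stretch the partition over $p$ rounds; the discounted gain $\alpha\sum_{r=0}^{p-1}\delta^{r}$ then dominates the single discounted penalty $\delta^{p}L$ for $p$ large, so $\pi_{ds}$ strictly beats honest play and the coalition is genuinely incentivized. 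Since the two regimes exhaust all admissible $\tau$ and each refutes robustness, no $(t,k)$\textsf{-robust} $\Pi$ can exist in the stated range.
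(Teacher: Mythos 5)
Your proposal is correct, but it does not follow the paper's route. The paper's proof (Appendix~\ref{app:r3-impossible}) mounts a \emph{single} attack: it invokes Claim~\ref{claim:byz-lim} to get $\tau > \lfloor\frac{n+t_0}{2}\rfloor \geq \lceil\frac{n}{2}\rceil$, asserts from this that at least one coalition signature is needed to form any quorum, and then has the entire coalition play $\pi_{abs}$, which is unpunishable (indistinguishable from network delay, so $D(\pi_{abs},\sigma)=0$) and yields $\alpha>0$ per round for $\theta=3$ versus $0$ for honest play, thereby killing liveness. Your two-regime decomposition is genuinely different, and it in fact covers terrain the paper's argument glosses over: when $k+t$ is close to $\lceil\frac{n}{3}\rceil$ and $t_0<\frac{n}{3}$, the admissible threshold can be as small as $\lfloor\frac{n+t_0}{2}\rfloor+1 \leq n-(k+t)$, so the honest players alone \emph{can} assemble a quorum, abstention does not stall the system, and the paper's inference ``$\tau>\lceil\frac{n}{2}\rceil$ hence a coalition signature is required'' does not follow. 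Your disagreement regime handles exactly this case with a partition-plus-equivocation argument that mirrors the second half of the paper's own proof of Claim~\ref{claim:byz-lim}, with the full coalition $K\cup T$ of size $k+t$ playing the role of $T$; and your two incentive observations --- that no honest player holds both conflicting signatures during the partition, so no Proof-of-Fraud can be formed, and that any post-healing penalty $\delta^{p}L$ is dominated by the discounted stream of fork gains $\alpha\sum_{r=0}^{p-1}\delta^{r}$ --- supply an incentive-compatibility step the paper never needs, since it only ever uses abstention. The cost of your route is length and the extra indistinguishability machinery (two reference worlds $E_A$, $E_C$ plus the attack world); what it buys is an argument that works for \emph{every} admissible $\tau$, not only those exceeding $n-(k+t)$, which makes it the more complete proof of the stated theorem.
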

\begin{proof}
    The proof shows that for any arbitrary protocol $\Pi$, rational players are incentivized to follow $\pi_{abs}$. Since $\pi_{abs}$ is indistinguishable from crash faults, no penalty-based (accountable) protocol can distinguish this behaviour (and thus reduce the utility). Due to space constraints, we omit the inclusion of the full proof in the paper. This compromises $(t,k)$\textsf{-eventual liveness} property and therefore, any protocol $\Pi$ is not a $(t,k)$\textsf{-robust} RC protocol.
\end{proof}

We also show a pessimistic result for rational players of $\theta = 2$. We show through Theorem~\ref{thm:r2-impossible} that for every protocol if the rational players are of type $\theta = 2$ then there always exists a strategy which compromises  $(t,k)$\textsf{-censorship resistance} while ensuring  $(t,k)$\textsf{-eventual liveness}. Note that this impossibility holds despite of existence of threshold encryption schemes~\cite{Miller2016Honeybadger},

\begin{theorem}[Rational Consensus under $\theta=2$]\label{thm:r2-impossible}
    Under threat model $\langle(\mathcal{P},T,K),\theta = 2,t_{0}\rangle$ no rational consensus protocol is \textsf{strongly} $(t,k)$\textsf{-robust} when $\lceil\frac{n}{3}\rceil \leq t + k \leq \lceil\frac{n}{2}\rceil - 1$.
\end{theorem}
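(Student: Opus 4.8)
The plan is to reduce \textsf{strong} $(t,k)$\textsf{-robustness} to a single design parameter, the confirmation threshold $\tau$ (the number of votes a block needs before it is confirmed), and then show that no admissible $\tau$ can simultaneously resist a forking attack and a censorship attack once $t+k\geq\lceil n/3\rceil$. Write $C=K\cup T$ with $|C|=t+k$ and $h=n-(t+k)$ for the number of honest players; note that the upper bound $t+k\leq\lceil n/2\rceil-1$ gives $h>n/2$, so the impossibility is genuinely a statement about rational behaviour and not about a lost honest majority. I would then split on the value of $\tau$.

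In the first case, $\tau\leq\lfloor\frac{n+(t+k)}{2}\rfloor$, I would run the partition attack of Claim~\ref{claim:byz-lim} with the whole coalition $C$ in the adversarial role: split the honest players into halves $H_1,H_2$ that communicate only through $C$, let a coalition leader propose conflicting blocks $v_1,v_2$ to the two sides, and have $C$ sign both. Since $|H_i|+|C|=\frac{h}{2}+(t+k)=\frac{n+(t+k)}{2}\geq\tau$, both blocks are confirmed, the honest players disagree, and the system is in $\sigma_{Fork}$. As $f(\sigma_{Fork},2)=\alpha>0>0=f(\sigma_{0},2)$, the rational players strictly prefer to participate, so $(t,k)$\textsf{-agreement} fails. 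In the second case, $\tau\geq\lfloor\frac{n+(t+k)}{2}\rfloor+1$, the chain of inequalities $\tau>\frac{n+(t+k)}{2}\geq n-(t+k)=h$ (the middle step being exactly $3(t+k)\geq n$) gives $h<\tau$. The coalition now plays $\pi_{abs}$ on every block containing the target transaction $tx$ and $\pi_{0}$ on all other blocks: a $tx$-block collects at most $h<\tau$ honest votes and is never confirmed, while non-$tx$ blocks still collect $h+k=n-t\geq n-t_{0}\geq\tau$ votes (using the range for $\tau$ from Claim~\ref{claim:byz-lim}), so the chain keeps growing. The system rests in $\sigma_{CP}$, giving the rational players $f(\sigma_{CP},2)=\alpha>0$, so the deviation is strictly profitable; $(t,k)$\textsf{-eventual liveness} is preserved while $(t,k)$\textsf{-censorship resistance} is broken. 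Since one of the two cases always applies, $\Pi$ is never \textsf{strongly} $(t,k)$\textsf{-robust}.

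The step I expect to be the main obstacle is showing that these deviations survive accountability, i.e. that the penalty term $L\cdot D(\pi,\sigma)$ does not neutralise them. For the censorship case this is immediate, since selective abstention is indistinguishable from a message delayed by the partially-synchronous network, so $D(\pi_{abs},\sigma)=0$ and no Proof-of-Fraud can ever be formed. The delicate point is the forking case, where $\pi_{ds}$ is in principle punishable by a Proof-of-Fraud assembled from the two conflicting signatures; I would argue that before GST the partition prevents any honest player from ever holding both signatures, so the equivocation is unpunished for every pre-GST round and the accumulated per-round reward $\alpha$ makes the deviation strictly profitable irrespective of the eventual collateral $L$. (One may further note that if $\tau\leq\lfloor\frac{n+t}{2}\rfloor$ the fork is realised by the byzantine set alone, removing the incentive question entirely, so only the middle band of $\tau$ truly needs the pre-GST argument.)

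Finally I would dispatch the threshold-encryption objection flagged just before the statement: a scheme such as~\cite{Miller2016Honeybadger} hides a transaction's \emph{contents} but not its \emph{provenance}, and because $h<\tau$ forces every confirmed block to rely on coalition votes, the coalition can simply refuse to confirm any honest-proposed block and instead drive progress through its own proposals, from which it excludes the encrypted blobs originating with the honest submitter of $tx$. Hence censorship persists even under threshold encryption, completing the argument that strong robustness is unattainable throughout $\lceil\frac{n}{3}\rceil\leq t+k\leq\lceil\frac{n}{2}\rceil-1$.
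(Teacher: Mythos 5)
Your proposal reaches the right conclusion but by a genuinely different route than the paper, and in one respect it is more complete. The paper's proof (Appendix~\ref{app:r2-impossible}) consists solely of what you call Case~2: a single strategy $\pi_{pc}$ in which the coalition abstains whenever the leader is honest and proposes $tx$-free blocks whenever the leader lies in $K\cup T$; censorship follows, eventual liveness is preserved, and accountability is vacuous because no message is ever double-signed, so $D(\pi_{pc},\sigma)=D(\pi_{0},\sigma)=0$. Your Case~2 is essentially this argument, but your justification of the key step ``honest players alone cannot confirm a block'' is self-contained: you derive $h<\tau$ from the case hypothesis $\tau\geq\lfloor\frac{n+(t+k)}{2}\rfloor+1$ together with $3(t+k)\geq n$. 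The paper instead imports this step from Claim~\ref{claim:byz-lim}, whose lower bound is phrased in terms of $t_{0}$, and that only yields $h=n-(t+k)<\tau$ when $t+k\geq\lceil\frac{n-t_{0}}{2}\rceil$ --- a condition that can fail at the bottom of the theorem's range (e.g.\ $n=9$, $t_{0}=2$, $t+k=3$, $\tau=6$ gives $h=\tau=6$, so honest votes alone confirm every block and the abstention-based censorship attack collapses). Your Case~1 exists precisely to cover that regime via the partition/forking attack, so your case split on $\tau$ closes a hole that the paper's single-strategy proof leaves open.

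The soft spot in your argument is exactly where you predicted: the claim in Case~1 that the deviation is profitable ``irrespective of the eventual collateral $L$.'' Since rational double-signers are eventually exposed once messages cross the partition after GST, profitability requires the discounted accumulated fork reward to beat the discounted one-time penalty, roughly $\alpha\frac{1-\delta^{R}}{1-\delta}>L\,\delta^{R}$ where $R$ is the round at which GST occurs; for a fixed schedule and large $L$ this can fail. To make the step rigorous you must quantify over network schedules: the impossibility only needs the existence of an execution in which the deviation is profitable, and since the coalition controls message delivery before GST (and $\sigma_{Fork}$ is absorbing, so the per-round reward $\alpha$ accrues indefinitely), $R$ can be taken large enough relative to $L$, $\alpha$, $\delta$. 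You should also note, as you implicitly do, that in the middle band $\lfloor\frac{n+t_{0}}{2}\rfloor<\tau\leq\lfloor\frac{n+(t+k)}{2}\rfloor$ the fork needs more than $t_{0}$ equivocators, so rational participation --- and hence this incentive argument --- is unavoidable there; this is the same standard of rigor the paper itself applies in Theorem~\ref{thm:trap-impossible}, where the fork is likewise assumed to succeed with certainty once enough colluders join. With that quantification spelled out, your proof is sound, and arguably a sharper account of why the theorem holds for every admissible confirmation threshold.
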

\begin{proof}

The proof follows by showing the existence of a strategy $\pi_{pc}$ where adversarial collusion set follows (1) $\pi_{abs}$ when the leader is honest, (2) $\pi_{0}$ (and omit censored transactions) when the leader is adversarial. Due to the indistinguishability between crash faults and $\pi_{abs}$, no accountable protocol is possible. Due to space constraints, the full proof is provided in Appendix~\ref{app:r2-impossible}.
\end{proof}

Having proven the impossibility of having a protocol that achieves RC for rational player types $\theta = 3$ and $2$, we now relax the player type further to $\theta = 1$. Under this utility model for rational players, Gramoli et al.~\cite{Gramoli2022Trap} proposed a Baiting-based protocol (which they call TRAP) and show Baiting is \emph{necessary} and \emph{sufficient} to achieve RC under partially-synchronous network settings for $t + k < \frac{n}{2}$. Following our brief discussion of the model and result of~\cite{Gramoli2022Trap} in Section~\ref{ssec:trap}, we show that the Baiting-based protocol does not ensure RC in repeated rounds of consensus due to the existence of another Nash equilibrium point which leads to disagreement. Further, this point being focal equilibria (see discussion in Section~\ref{ssec:eq-ic}), it will be preferred over the secure equilibria point proposed by~\cite{Gramoli2022Trap}. We demonstrate this result in Theorem~\ref{thm:trap-impossible} below. For notational consistency of this result with~\cite{Gramoli2022Trap} we use $R := f(\sigma_{fork},1)$

\begin{theorem}[Baiting based Rational Consensus under $\theta = 1$]\label{thm:trap-impossible}    
    Consider any baiting-based RC protocol $\Pi$ the threat model $\mathcal{M} = \langle (\mathcal{P},K,T),\theta=1,t_{0}\rangle$. The set of rational players following $\pi_{fork}$ is a Nash-equilibrium for $|K| > 2 + t_{0} - t$. Thus, $\Pi$ is not $(t,k)$\textsf{-robust} RC for $t_{0} = \lceil\frac{n}{3}\rceil - 1$.  
\end{theorem}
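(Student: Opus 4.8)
The plan is to exhibit an explicit strategy profile $\pi_{fork}$ in which \emph{every} rational player joins the colluding set in equivocating rather than baiting, to show that this profile drives the system into $\sigma_{Fork}$ in every round, and then to verify that no rational player has a profitable unilateral deviation, so that $\pi_{fork}$ is a Nash equilibrium whose outcome violates $(t,k)$\textsf{-agreement}. Concretely, $\pi_{fork}$ instructs each rational player to play $\pi_{ds}$ on the two conflicting proposals that the colluding leader feeds to the two honest sub-quorums exploited under partial synchrony. Since the positive payoff of a type-$\theta=1$ player depends on the play only through the system state, with $R := f(\sigma_{Fork},1)=\alpha>0$, and since no player submits a Proof-of-Fraud along the equilibrium path, the on-path per-round utility of each rational player is exactly $R$ with penalty term $D(\pi_{ds},\sigma_{Fork})=0$, giving discounted utility $R/(1-\delta)$.

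First I would establish feasibility of the fork. By Claim~\ref{claim:byz-lim} any baiting-based $\Pi$ runs with a quorum $\tau \ge \lfloor (n+t_0)/2\rfloor + 1$, and at its security-minimal value this gives $2\tau-n \le t_0+2$ (the same granularity that makes a Proof-of-Fraud consist of the conflicting signatures of the equivocators). Two conflicting $\tau$-certificates at one height can be assembled iff the number of equivocators is at least $2\tau-n$, so by partitioning the $h=n-k-t$ honest players into two groups and equivocating on both values the collusion forks the chain whenever $k+t \ge 2\tau-n$. The sharper hypothesis $|K|>2+t_0-t$, i.e. $k+t \ge t_0+3$, is exactly what makes the fork robust to a single defection: after deleting any one rational player the remaining $k+t-1 \ge t_0+2 \ge 2\tau-n$ colluders still complete both certificates, so the system stays in $\sigma_{Fork}$ no matter what that player does.

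The heart of the argument, and the step I expect to be the main obstacle, is ruling out a profitable unilateral deviation, in particular the baiting strategy that $\Pi$ relies on for security. Here I would invoke the one-shot deviation principle together with the robustness just established: because the fork occurs regardless of $\mathcal{P}_i$'s action, $\mathcal{P}_i$ keeps collecting the maximal state payoff $R$ in every round, so no deviation can raise the fork component of the utility. It then remains to show the baiting reward cannot compensate. I would argue that a lone baiter cannot actually realize $\mathcal{R}$: in a baiting protocol the reward is released only when the bait averts disagreement (or, in~\cite{Gramoli2022Trap}, only once a threshold $m$ of baiters acts), neither of which holds when the fork is already guaranteed by the other $k+t-1$ equivocators; moreover, submitting a Proof-of-Fraud exposes $\mathcal{P}_i$'s own conflicting signatures and severs the recurring fork income, so the deviation yields at most a one-time $\mathcal{R}$ against the on-path stream $R/(1-\delta)$. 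Checking that $\pi_{abs}$ and $\pi_0$ are also non-improving is routine, since each either leaves the state at $\sigma_{Fork}$ (payoff $R$) or moves it to a state that is weakly worse for a type-$1$ player while still incurring $D=0$.

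Finally I would assemble the conclusion. Having shown $\pi_{fork}$ is a Nash equilibrium whose every-round outcome is $\sigma_{Fork}$, two honest players confirm distinct blocks at the same height, breaking $(t,k)$\textsf{-agreement}; and as argued in Section~\ref{ssec:eq-ic} this disagreement equilibrium coexists with the baiting equilibrium and is at least as attractive to the rational players, so $\Pi$ cannot be declared $(t,k)$\textsf{-robust}. Substituting the byzantine bound $t_0=\lceil n/3\rceil-1$, the equilibrium condition becomes $k+t > \lceil n/3\rceil+1$, which is satisfiable throughout the target window $\lceil n/3\rceil \le k+t \le \lceil n/2\rceil-1$; hence the baiting-based protocol of~\cite{Gramoli2022Trap} fails to achieve RC there, completing the theorem.
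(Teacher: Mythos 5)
Your proposal is correct and takes essentially the same approach as the paper's proof: both exhibit the all-rational-players-equivocate profile, use quorum arithmetic to show the fork survives any single defection under $|K| > 2 + t_{0} - t$ (the paper's defector threshold $m > t_{0} + \frac{k+t-n}{2} > 1$ is equivalent to your count $k+t-1 \geq t_{0}+2 \geq 2\tau - n$), and conclude that a lone baiter's probability of realizing the baiting reward collapses to zero, so $\pi_{fork}$ is a Nash equilibrium that violates $(t,k)$\textsf{-agreement}. The only differences are bookkeeping, not substance: the paper works with the TRAP-style shared collusion gain $\frac{G}{k}$ while you use the discounted per-round state payoff $R/(1-\delta)$, and both yield the same comparison against the deviator's zero payoff.
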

\begin{proof}
    The proof follows from showing that if collusion follows grim-trigger strategy\footnote{grim-trigger: if one player of collusion baits, all players will abandon collusion.}, then another Nash Equilibria (NE) exists in repeated rounds where players of the collusion deviate in every round. This Equilibrium is pareto-optimal due to which players are more prone to end up at this NE point than the secure NE proposed by~\cite{Gramoli2022Trap}.
\end{proof}  

In this section, we demonstrated the impossibility of achieving RC when rational players are of the type $\theta = 3$ or $2$ and $ k + t > \frac{n}{3}$. We further show that when rational players are of type $\theta=1$, the previously existing ``baiting-based'' solution~\cite{Gramoli2022Trap} has a Nash equilibrium strategy, resulting in Disagreement. From our discussion in Section~\ref{ssec:eq-ic}, this is a more stable equilibrium point than the secure equilibrium point proposed by~\cite{Gramoli2022Trap} (that relies on $m > \frac{t + k - n}{2} + t_{0}$ deviating together; ref. Lemma 4.3 and Lemma 4.4 in~\cite{Gramoli2022Trap}). In the next section, we propose a protocol \proname\ which achieves $(t,k)$\textsf{-robust} RC without relying on baiting by rational players. 

\section{\textsf{pRFT}: Rational Consensus Protocol}
\label{sec:our-protocol}

Baiting-based consensus~\cite{Gramoli2022Trap} introduces interesting insights about using Proof-of-Fraud (PoF) to penalize deviating rational players. The protocol relies on incentivizing rational players to bait the collusion, and we show in Theorem~\ref{thm:trap-impossible} existence of another (better) Nash Equilibrium for any rational player to deviate to $\pi_{Fork}$ than follow honest-baiting $\pi_{bait}$ strategy. To solve this problem, we attempt to capture PoF through honest players prone to any incentive manipulation. Towards this, we propose \proname, which is described below.

\subsection{Protocol}
The \proname\ protocol runs in discrete rounds. The set of players involved are $\mathcal{P} = \{\mathcal{P}_{1},\mathcal{P}_{2}\ldots,\mathcal{P}_{n}\}$ and in round $r$ the leader is $\mathcal{P}_{l}$ for $l = 1 + (r\;\text{mod}\;n)$. We assume the network is partial-synchronous, and reliable-broadcast i.e. messages reach to the receiver untampered, although they might suffer network delays. Note that for brevity, we abstract the cryptographic verification of the message to be done by the Recv procedure (lines $7,11,17,24,27$ and $29$ of the protocol in Figure~\ref{fig:protocol}). Therefore, any message coming through it will contain only valid signatures, and invalid messages are discarded.
Each round progresses in $4$ distinct phases, described as follows:

\paragraph{Propose Phase} The leader $\mathcal{P}_{l}$ has a set of transactions that they want to publish to the ledger (blockchain). $\mathcal{P}_{l}$ selects a set of these transactions $\overline{tx} = \{tx_{1},tx_{2}\ldots,tx_{c}\}$ and form a \textsf{Block} $B_{r}$. She then proposes this block by broadcasting over the network via a \emph{propose} message with their cryptographic signature $s_{l}$ on the hash $h_{l}:= \mathbf{H}(\textsf{Block}||r)$ of the \textsf{Block}\footnote{$h_{l}$ also contains the round number therefore, signed messages from one round can not be used in another round using replay attack.}. They construct message $m := (\langle Propose, B_{l}, h_{l}, r\rangle$ and signature $s^{pro}_{l}$ to broadcast $(m,s_{l}^{pro})$. All non-leader players $\mathcal{P}_{i}$ receive the propose message from the leader and move to the vote phase. The non-leader players $\mathcal{P}_{i} \;\forall i\not=l$ (1) check the validity of the \emph{propose} message from the leader $\mathcal{P}_{l}$ (2) broadcast a \emph{vote} message if the \emph{propose} message is valid. In checking the validity of the \emph{propose} message, the player verifies $H_{l} = \mathbf{H}(\textsf{Block}||r)$ and verifies signature $s_{l}$ on the message $\langle Propose, B_{l}, h_{l}, r\rangle$. They then sign $s_{i}^{vote}$ on the message $m^{vote}_{i} := \langle Vote,h_{i},,s^{pro}_{l},r\rangle$ and broadcast $(m^{vote}_{i},s_{i}^{vote})$ over the network.

\paragraph{Vote Phase} In this phase, they wait for $n - t_{0}$ \emph{valid} vote messages from other players from the previous phase for some (same) proposed value $h_{*}$. If no such value is obtained, the player sets $h_{*} := \perp$ (default empty value). Each player commits to the value $h_{*}$ by constructing a Commit message. This consists of the decided value $h_{*}$ and set $V_{i}$ of $\geq n - t_{0}$ votes on this value $V_{i} = \emptyset$ if $h_{*} = \perp$. The player $\mathcal{P}_{i}$ signs on the message $m_{i}^{com} := \langle Commit, h_{*}, s^{pro}_{l}, V_{i}, r\rangle$ and broadcasts $(m_{i}^{com},s_{i}^{com})$ to all other players. 

\newcommand{\TAB}[0]{\hspace{15pt}}
\begin{figure}[t]
    \centering
    \begin{small} 
    \begin{mybox}{\proname($\overline{p}_{i=1}^{n},t_{0}$)}
    \begin{multicols}{2}
    \begin{algorithmic}[1]
    \STATE\textbf{Propose Phase:}\\
    \IF{$i = r \text{ mod } n$} 
        \STATE $\textsf{Block}_{r}$ := \textsf{ConstructBlock}$(\overline{tx},r,b_{parent},p_{i})$\STATE $h_{l} := \textsf{Hash}(\textsf{Block}_{r})$
        \STATE \textsf{Broadcast}$(\langle Propose, Block_{r}, h_{l}, r\rangle_{i},s^{pro}_{i})$
    \ELSE
        \STATE On Recv. $(\langle Propose, Block_{r},h_{l},r \rangle, s_{l}^{pro})$:
        \STATE \TAB \textsf{Broadcast}$(\langle Vote, h_{l}, s_{l}^{pro},r\rangle_{i},s_{i}^{vote})$
    \ENDIF
    \vspace{5pt}
    \STATE \textbf{Vote Phase:}
    \STATE On Recv. $(\langle Vote, h_{j}, s_{l}^{pro},r\rangle,s_{j}^{vote})$:
    \STATE \TAB $votes[h_{j}] := votes[h_{j}] \cup \{s_{j}^{vote}\}$
    \IF{for some $h_{*}, vote[h_{*}].size \geq n - t_{0}$}
    \STATE \textsf{Broadcast}\\$(\langle Commit, h_{*}, s_{l}^{pro}, vote[h_{*}],r\rangle_{i},s_{i}^{com})$
    \ENDIF
    \vspace{5pt}
    \STATE \textbf{Commit Phase:}
    \STATE On Recv. $(\langle Commit, h_{j}, s_{l}^{pro},\overline{vote}_{j},r\rangle,s_{j}^{com})$:
    \STATE \TAB $commit[h_{j}] := commit[h_{j}] \cup \{s_{j}^{com}\}$
    \IF{for some $h_{*}, commit[h_{*}].size \geq n - t_{0}$}
    \STATE \textsf{Broadcast}$(\langle Reveal, h_{*}, s_{l}^{pro}, commit[h_{*}],r\rangle_{i},s_{i}^{rev})$
    \ENDIF  
    \pagebreak
    \vspace{5pt}
    \STATE \textbf{Reveal Phase:}
    \STATE $D_{i} := \emptyset, M_{i} := \emptyset, F_{i} := \emptyset$
    \STATE On Recv. $(\langle Reveal, h_{j}, s_{l}^{pro},\overline{commit}_{j},r\rangle,s_{j}^{rev})$:
    \STATE \TAB $M_{i} \gets M_{i} \cup \{\overline{commit}_{j}\}$
    \STATE \TAB $D_{i} := \textsf{ContructPoF}(M_{i})$
    \STATE On Recv $(\langle Final,B_{j},s^{pro}_{l}\rangle_{j},s_{j})$
    \STATE \TAB $F_{i} \gets F_{i} \cup \{s_{j}^{fin}\}$
    \STATE On Recv $(\langle Expose, D_{j}, r\rangle,s_{j})$
    \STATE \TAB \textsf{Stash}$(D_{j})$, $r := r + 1$ 
    \IF{$|D_{i}| > t_{0}$}
    \STATE \textsf{Broadcast}($\langle Expose, D_{i},r\rangle_{i},s_{i}^{expose}$)
    \ELSIF{$|M_{i}| \geq n - t_{0}$}
    \STATE \textsf{Broadcast}$(\langle Final,B_{l},s^{pro}_{l}\rangle_{i},s_{i}^{fin})$
    \ELSIF{$|F_{i}| > \frac{n}{2}$}
    \STATE \textsf{Broadcast}$(\langle Final,B_{l},s^{pro}_{l}\rangle_{i},s_{i}^{fin})$
    \ENDIF
    
    \end{algorithmic}
    \end{multicols}
    \end{mybox}
    \end{small}
    \caption{\proname\ Protocol}
    \label{fig:protocol}
\end{figure}

\paragraph{Commit Phase} Upon receiving $\geq n - t_{0}$ commit messages for a particular (same) value $h_{tc}$, the player reaches \emph{tentative-consensus} on this block. Each player $\mathcal{P}_{i}$ shares their tentative consensus by sharing value $h_{tc}$ which got $\geq n - t_{0}$ commit messages and a Proof-of-Commitment which is the set $W_{i}$ (for player $\mathcal{P}_{i}$ of $n - t_{0}$ signatures on the commit messages on value $h_{*}$. They construct message $m_{i}^{rev} := \langle Reveal, h_{tc},h_{l}, W_{i}, r\rangle$ and signature $s_{i}^{rev}$ on this message to broadcast $(m_{i}^{rev},s_{i}^{rev})$.

\paragraph{Reveal Phase} Each player verifies across the set of Proof-of-Commitments $W_{j}$ for each $(m_{j}^{rev},s_{j}^{rev})$ received by $\mathcal{P}_{i}$ any attempts of \emph{double-signature} in the proof vectors. Each player accumulates a set of double signatures as Proof-of-Fraud (PoF) by invoking the \textsf{ConstructProof} procedure (see Figure~\ref{fig:pof-protocol}). These PoF can be used to \emph{burn} the tokens/coins deposited by the deviating player by including a corresponding transaction in a future block. If there are $\geq n - t_{0}$ messages and a total of $\leq t_{0}$ players with double signatures (PoF) then the player reaches \emph{Final Consensus} on the proposed block $B_{l}$. In this case, they construct a message $m^{fin}_{i} := \langle Final, h_{l},s_{l}^{pro} \rangle$ and signature $s_{i}^{fin}$ on it. They then broadcast this pair $(m_{i}^{fin},s_{i}^{fin})$ to the client/network. Otherwise, if the set of double signatures $\geq t_{0} + 1$, they broadcast this PoF with $\geq t_{0} + 1$ double-signatures (represented by set $D_{i}$). They then construct message $m_{i}^{exp} := \langle Expose, D_{i}, r\rangle$ and signature $s_{i}^{exp}$ and broadcast $(m_{i}^{exp},s_{i}^{exp})$. If the player obtains $> \frac{n}{2}$ Final messages, this means at least one honest player has finalized on the block. Then, this player also finalizes and broadcasts a final message.

\begin{figure}
\centering
\begin{subfigure}[c]{.45\textwidth}
  \includegraphics[width=0.9\linewidth]{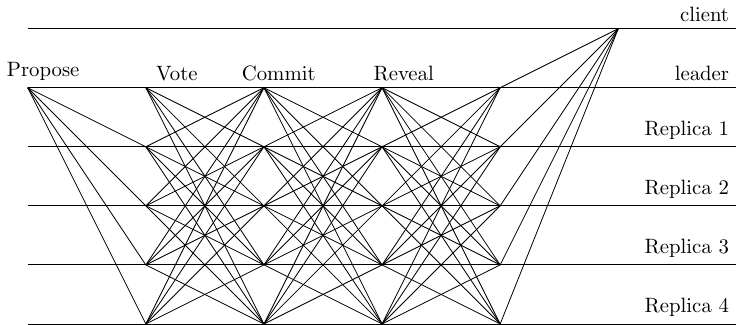}
  \caption{Normal execution of \proname\ protocol}
  \label{fig:protocol-diagram}
\end{subfigure}%
\begin{subfigure}[c]{.4\textwidth}
    \begin{small}
    \begin{tabular}{ p{6em}  | p{12em}}
        \toprule
        Message Type  & Message \\
        \midrule
        \emph{propose} & $(\langle Propose, B_{l}, h_{l}, r\rangle,s_{l}^{pro})$\\
        \emph{vote} & $(\langle Vote,h_{i},,s^{pro}_{l},r\rangle,s_{i}^{vote})$\\
        \emph{commit} & $(\langle Commit, h_{*}, s^{pro}_{l}, V_{i}, r\rangle,s_{i}^{com})$\\
        \emph{reveal} & $(\langle Reveal, h_{tc},h_{l}, W_{i}, r\rangle,s_{i}^{rev})$\\ 
        \emph{expose} & $(\langle Expose, D_{i}, r\rangle,s_{i}^{exp})$\\
        \emph{final} & $(\langle Final, h_{l},s_{l}^{pro} \rangle,s_{i}^{final})$\\
        \midrule
        \emph{view-change} & $(\langle ViewChange,\textsf{Phase},r, \rangle,s_{i}^{vc})$\\
        \emph{commit-view} & $(\langle CommitView,V_{i},r\rangle,s_{i}^{cv})$\\
        \bottomrule
    \end{tabular}
    \smallskip
    \end{small}
    \caption{Messages sent in \proname}
    \label{tab:messages}
\end{subfigure}
\caption{\proname\ protocol execution \& messages}
\label{fig:test}
\end{figure}


\subsection{View Change} 
\label{ssec:view-change}

In each round players wait for proposal messages in the proposed phase or $\geq n - t_{0}$ messages in the other phases. Either due to delays in the network or $> t_{0}$ players deviating from the protocol, if a timeout happens (when the duration of phase exceeds the local waiting time of $\Delta$) then \emph{view change} is initiated. The view change protocol proceeds as follows: 
\begin{itemize}[leftmargin=8pt]
    \item[1] A player triggers view-change if the following happens:
        \begin{itemize}[leftmargin=8pt]
            \item[-] a timeout in waiting time $\Delta$
            \item[-] conflicting signatures on 2 different proposed values $v_{l1},v_{l2}$ in the same round by the leader $\mathcal{P}_{l}$
            \item[-] conflicting signatures by $\geq t_{0} + 1$ players in some phase.
        \end{itemize}
    Under either of these scenarios, the player signs the message $m_{i}^{vc} := \langle View-Change,\textsf{Phase},r, \rangle$ with signature $s_{i}^{vc}$ and broadcast $(m_{i}^{vc},s_{i}^{vc})$ to the network.
    \item[2] If a player $\mathcal{P}_{i}$ receives a view-change message from player $\mathcal{P}_{j}$ for some phase in some round, they store the message (1) Wait for $\geq n - t_{0}$ such messages (from the same phase) or timeout or, (2) If they have $\geq n - t_{0}$ messages from that phase, they send the corresponding messages to $\mathcal{P}_{j}$.
    \item[3] If a player receives $\geq n - t_{0}$ \emph{view-change} messages including their own \emph{view-change} (represent this set as $V_{i}$), they construct a \emph{commit-view} message $m_{i}^{cv} := \langle CommitView,V_{i},r\rangle$ along with the signature $s_{i}^{cv}$. The player will discontinue the current round and wait for a view change.
    \item[4] If a player receives a commit-view message, they verify if it consists of $n - t_{0}$ valid (signed) view-change messages. If so, they commit to view change and send a \emph{commit-view} message. 
    \item[5] When the player receives $> n - t_{0}$ commit-view messages, they commit to view change and change round from $r$ to $r + 1$ and begin the corresponding propose phase.
\end{itemize}

The view change sub-protocol should ensure the following two properties:
\begin{itemize}[leftmargin=8pt]
    \item \textbf{Consistency:} If a player $\mathcal{P}_{i} \in H$ has committed to view-change, then any other player $\mathcal{P}_{j} \in H$ should also eventually commit to view-change (should not reach agreement in $r$).
    \item \textbf{Robustness:} The set $T$ cannot launch a view-change if the leader is honest $\mathcal{P}_{l} \in H$.
\end{itemize}
We show in the following Claim~\ref{claim:view-change-properties} that the view-change protocol of \proname\ satisfies both of these properties. We defer the proof to Appendix~\ref{app:view-change-properties}.

\begin{claim}\label{claim:view-change-properties}
    The \emph{view-change} sub-protocol of \proname\ satisfies both \emph{Consistency} and \emph{Robustness}.
\end{claim}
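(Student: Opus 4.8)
The plan is to prove the two sub-properties separately, leaning on three facts the model supplies: signatures are unforgeable, channels are reliable (so a message broadcast by an honest player is eventually delivered to every honest player), and the quorum threshold $n-t_0$ with $t_0=\lceil n/3\rceil-1$ satisfies the two numerical inequalities $t\le t_0$ (since $t<n/4<n/3\le t_0+1$) and $t+k<n-t_0$ (since $t+k<n/2<2n/3\le n-t_0$). These inequalities are exactly what the argument turns on, so I would record and verify them first, together with the quorum-overlap bound $n-2t_0\ge n/3>t$ that I need at the end.

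For \emph{Robustness} I would argue that an honest leader never gives an honest player a reason to sign a view-change message, and then simply count signatures. When $\mathcal{P}_l\in H$ each of the three triggers of Step~1 fails: in a period of synchrony the leader's proposal and each subsequent phase's $n-t_0$ messages arrive within $\Delta$, so no timeout fires; an honest leader signs a single proposal, so there is no leader double-signature; and producing $\ge t_0+1$ conflicting signatures in a phase is impossible for $T$, since $|T|=t\le t_0$. Hence every valid view-change message in round $r$ is signed by a Byzantine player, giving at most $t$ of them. Because a commit-view message must certify $n-t_0$ valid view-change signatures and $t<n-t_0$, unforgeability implies no valid commit-view can ever be assembled, so no honest player is driven past Step~3. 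The same count, using $t+k<n-t_0$, shows robustness persists even if all of $K$ colludes with $T$.

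For \emph{Consistency} I would show that the first honest player to commit necessarily publishes a self-certifying commit-view. If honest $\mathcal{P}_i$ commits to view-change, then either it gathered $n-t_0$ view-change messages directly (Step~3) or it accepted a valid commit-view (Step~4); in both cases it holds, and being honest broadcasts, a commit-view carrying $n-t_0$ valid view-change signatures. By reliable delivery this message reaches every honest $\mathcal{P}_j$, which re-verifies the $n-t_0$ signatures, commits to view-change, discontinues round $r$, and relays its own commit-view by Step~4. Thus all honest players commit and none finalizes in $r$. To make ``none finalizes in $r$'' airtight I would invoke quorum intersection: a commit-view certifies $n-t_0$ view-change signatures while finalization certifies $n-t_0$ reveal/commit signatures, and the two quorums overlap in at least $n-2t_0\ge n/3>t$ players, any of whom contributing to both has double-signed.

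The main obstacle, and the step I would spend the most care on, is precisely this interaction between the view-change quorum and the finalization quorum. Because $t+k$ may exceed $n-2t_0$, the overlap of the two $(n-t_0)$-quorums can in principle be wholly adversarial, so quorum intersection alone does not hand me an honest witness; the argument must instead route through accountability, showing that the only way to have both a valid commit-view and a valid finalization in the same round is for $\ge t_0+1$ colluding players to double-sign, which is exactly the event detected by the proof-of-fraud construction of the Reveal phase and exposed by the $|D_i|>t_0$ branch. I would therefore state Consistency as the dichotomy ``either all honest players commit to view-change, or a proof-of-fraud is produced and the deviators are stashed,'' and check that this is consistent with the claimed property once the accountability branch is folded in.
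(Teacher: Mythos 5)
Your \emph{Robustness} argument is sound and is essentially the paper's: with an honest leader no honest player ever signs a view-change, so at most $t \le t_0$ (or $t+k < n-t_0$ under full collusion) view-change signatures can exist, which is below the $n-t_0$ threshold needed for a valid commit-view.

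The \emph{Consistency} half, however, has a genuine gap, and it originates in your very first step: you instantiate the quorum threshold with $t_0 = \lceil n/3\rceil - 1$. That value belongs to Theorem~\ref{thm:trap-impossible} (the TRAP impossibility), not to \proname. The claim concerns the view-change sub-protocol of \proname, which is analyzed under the threat model $\mathcal{M} = \langle(\mathcal{P},T,K),\theta=1,\lceil\frac{n}{4}\rceil - 1\rangle$, i.e.\ $2t_0 < \frac{n}{2}$. With the correct parameter your own worry --- ``the overlap of the two $(n-t_0)$-quorums can in principle be wholly adversarial'' --- evaporates, because $n - 2t_0 > \frac{n}{2} > k+t$. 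Indeed the paper's proof is exactly the counting you abandoned: if honest $\mathcal{P}_1$ (in honest partition $A$) assembles a view-change quorum then $|A| + (k+t) \ge n - t_0$; if honest $\mathcal{P}_2$ (in disjoint honest partition $B$) reaches agreement in round $r$ before the commit-view arrives then $|B| + (k+t) \ge n - t_0$; since $|A|+|B| \le n-(k+t)$, adding the two inequalities forces $k + t + 2t_0 \ge n$, contradicting $k+t < \frac{n}{2}$ and $2t_0 < \frac{n}{2}$. No honest witness in the quorum overlap is ever needed --- the adversary is simply double-counted on both sides and the arithmetic still closes.

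Because you believed the counting fails, you retreat to proving a dichotomy: ``either all honest players commit to view-change, or a proof-of-fraud is produced and the deviators are stashed.'' This is strictly weaker than the stated property and does not establish the claim. Consistency requires that once an honest player commits to view-change, \emph{no} honest player reaches agreement in round $r$; producing a PoF and burning collateral after the fact is an incentive-layer remedy that does not undo a finalization that already happened in one partition, so under your dichotomy a fork between a finalized block and a view-changed round remains possible as a protocol execution. The whole point of Claim~\ref{claim:view-change-properties} is that this situation is \emph{mechanically impossible} by quorum arithmetic, irrespective of incentives; folding in accountability cannot substitute for that.
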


\subsection{Discussion}

The protocol leverages Proof-of-Fraud (PoF) to penalize rational players on the following $\pi_{ds}$. In addition, similar to~\cite{Micali2017Algorand} \proname\ uses \emph{tentative} and \emph{final} consensus. We discuss (i) how PoF is realized in \proname, (ii) the advantage of tentative and final consensus, and (iii) message complexity of \proname. 

\subsubsection{Accountability and Proof-of-Fraud (PoF)}
\label{sssec:proof-of-fraud}
\proname\ implements penalty mechanism via Proof-of-Fraud. Each player that is a part of the consensus committee $\mathcal{P}$ deposits some amount $L$ as collateral.  This collateral is locked unless some specified $q$ number of blocks are mined. If there is some malicious behaviour by player $\mathcal{P}_{i}$, and a PoF exists against them, this PoF can be used as an input to the transaction to burn the collateral $L$ of the player $\mathcal{P}_{i}$. Due to space constraints, we formally present the PoF construction in Appendix~\ref{app:construct-proof}. The property where deviation of more than $t_{0}$ (for some $t_{0}$) players is captured along with the identities of deviating players is called accountability and some existing consensus protocols provide accountability~\cite{Civit2021Polygraph,Gramoli2022Trap,Ranchal2020ZLBAccountable} of deviating players. Motivating from~\cite[Definition 1]{Civit2021Polygraph} we define Accountability as follows:
\begin{definition}[Accountability]\label{def:accountability}
    If two honest parties output different values, then eventually all honest parties reach a state $s_j$ and receive/construct some Proof-of-Fraud (PoF) $\pi \in \{0,1\}^{*}$ such that $\exists$ verification algorithm $V(\cdot)$ the value $V(\pi)$ outputs set of $\geq t_{0} + 1$ deviating (guilty) players.
\end{definition}

\definecolor{DarkGreen}{rgb}{0,0.4,0}
\begin{figure}
\centering
    \begin{small}
    \begin{tabular}{ p{14em} | p{7em}  p{5em} p{6em}}
    \toprule
    \textbf{Protocol} & \textbf{Message Complexity} & \textbf{Message Size} & \textbf{Accountability} \\
    \midrule
    \rowcolor{Gray}
    pBFT~\cite{CastroLiskov1999pBFT} & $O(n^{3})$ & $O(\kappa\cdot n^{4})$ &$\times$ \\
    Hotstuff~\cite{Yin2019Hotstuff} & $O(n^{2})$ & $O(\kappa\cdot n^{3})$ & $\times$ \\
    \rowcolor{Gray}
    Polygraph~\cite{Civit2021Polygraph}$^{\dagger}$ & $O(n^{3})$ & $O(\kappa\cdot n^{4})$ & \checkmark \\ 
    \color{DarkGreen}\proname & \color{DarkGreen}$O(n^{3})$ & \color{DarkGreen}$O(\kappa\cdot n^{4})$ & \color{DarkGreen}\checkmark \\ 
    \bottomrule
    \end{tabular}
    
    $\dagger$While polygraph achieves same guarantees, their threat model is much weaker than \proname's
    \smallskip
    \caption{Message Complexity for different consensus protocols compared with \proname}
    \label{tab:message-complexity}
    \end{small}
\end{figure}

\subsubsection{Tentative and Final Consensus}
\label{sssec:tentative-final-consensus}
After the correct execution of phases 1-3 (propose, vote and commit) of the \proname, each player reaches a \emph{tentative consensus} on the \textsf{Block} $B_{r}$. $B_r$ reaches \emph{final consensus} after the correct execution of step 4 if no $\mathcal{P}_{i} \in K$ deviates from the protocol. 

\paragraph{Effectiveness of Tentative Consensus} It is interesting to observe that tentative consensus will be finalized unless a rational player deviates from the protocol. For rational players of type $\theta = 1$ they are incentivized only in system state $\sigma_{Fork}$ for which they would have to follow $\pi_{ds}$. However, as we show in the following Section~\ref{sec:protocol-analysis}, this deviation can be caught in phase 4 of the protocol. Due to this, any $\mathcal{P}_{i} \in K$ is disincentivized from deviating from the protocol, ensuring that Tentative Consensus will convert to Final Consensus. 

\subsubsection{Message Complexity}
In normal executions, \proname\ uses all-to-all broadcasts in $4$ phases, leading to the message complexity $n^{3}$. Additionally, in Vote, Commit and Reveal phases, they share a set of signatures on messages from previous rounds. Therefore, the message complexity becomes $\kappa\cdot n^{4}$ for security parameter $\kappa$ used in PKI. The message complexity and size are on par with the best protocols that provide accountability, such as~\cite{Civit2021Polygraph} (while \proname\ tolerates a stricter adversary model compared to these solutions). We present this comparison in Table~\ref{tab:message-complexity} (from~\cite[Table 1]{Civit2021Polygraph}).

\section{Analysis of \textsf{pRFT}}
\label{sec:protocol-analysis}
We analyse the security and liveness of the protocol under partial synchronous network and threat model $\mathcal{M} = \langle (\mathcal{P},T,K),\theta=1,\lceil\frac{n}{4}\rceil - 1\rangle$. Therefore in the worst case, $|T| = t_{0}$ and $n = 4t_{0} + 1$. Further, $k + t < \frac{n}{2}$. First we show that irrespective of strategy followed by $\mathcal{P}_{i} \in T$ if remaining $\mathcal{P}\setminus T$ do not deviate from the protocol,  agreement is reached on one value (in periods of synchrony) when leader is non-deviating. In a period of asynchrony, view-change (due to timeout) happens. 

\begin{claim}\label{claim:agreement-byz}
    In any round $r$ of \proname\ such that leader $\mathcal{P}_{l} \not\in T$, irrespective of the strategy of the set $T$, if remaining $\mathcal{P}\setminus T$ play $\pi \not= \pi_{fork}$ then exactly one of the following holds: 
    \begin{itemize}[leftmargin=8pt]
        \item If network is synchronous\footnote{note that this period of synchrony can happen either if network is synchronous or GST event has already happened in partially-synchronous network} and $\mathcal{P}\setminus T$ play honestly, agreement is reached on a single block $B_{l}$
        \item If network is asynchronous\footnote{happens in partially-synchronous networks before GST event} or some $\mathcal{P}_{i} \in K$ follows $\pi \not\in \{\pi_{fork},\pi_{0}\}$, timeout triggers view-change. 
    \end{itemize}
\end{claim}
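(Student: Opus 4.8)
The plan is to prove the two bullets by a case analysis on the pair (network regime, behaviour of $\mathcal{P}\setminus T$), observing first that under the standing assumption that every player in $\mathcal{P}\setminus T$ plays $\pi\neq\pi_{fork}$ the two hypotheses are mutually exclusive and jointly exhaustive: since the only deviation from $\pi_0$ still available to such a player is $\pi_{abs}$, the negation of ``synchronous and $\mathcal{P}\setminus T$ all honest'' is exactly ``asynchronous or some $\mathcal{P}_i\in K$ plays $\pi_{abs}$''. The central quantitative tool throughout is quorum intersection at the tight point $n=4t_0+1$ with phase threshold $n-t_0=3t_0+1$: any two sets of $n-t_0$ signers overlap in at least $2(n-t_0)-n=2t_0+1$ players, and since only the $\le t_0$ Byzantine players can equivocate (the rest never play $\pi_{fork}$), at least $t_0+1\ge 1$ of the overlapping signers are non-equivocating.

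For the first bullet I would assume synchrony and that all $n-t$ players of $\mathcal{P}\setminus T$ follow $\pi_0$, and walk through the four phases. The honest leader broadcasts a single block $B_l$; the $n-t\ge n-t_0$ honest players all vote for $h_l$, so every honest player collects $\ge n-t_0$ votes for $h_l$, commits to $h_l$, and likewise advances through Commit and Reveal within the bounded delay. In the Reveal phase the only double-signers are Byzantine, hence $|D_i|\le t_0$ so the $|D_i|>t_0$ branch is never taken, while $|M_i|\ge n-t_0$, and each honest player finalises $B_l$. Uniqueness ($B_l$ and no other value) follows from the quorum-intersection bound above: a second committed value would require a non-equivocating player to sign two conflicting votes, contradicting $\pi_0$; moreover $n-t_0>n/2$ guarantees the $>\tfrac{n}{2}$-\emph{final} branch also fires, so stragglers finalise the same $B_l$. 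This yields $(t,k)$\textsf{-agreement} on the single block $B_l$.

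For the second bullet I would split into the asynchronous case and the abstaining case. If the network is asynchronous (pre-GST), message delays are unbounded, so an honest player's wait for $\ge n-t_0$ messages in some phase can exceed $\Delta$, firing the timeout trigger of the view-change sub-protocol; \emph{Consistency} of Claim~\ref{claim:view-change-properties} then forces every honest player to commit to view-change, so the round advances to $r+1$ with no agreement and no fork. If instead the network is synchronous but some $\mathcal{P}_i\in K$ plays $\pi_{abs}$, I would count the signers available to an honest player under the worst-case (non-compensating) Byzantine strategy: with $t=t_0$ one has $h=n-t-k=n-t_0-k$ honest players, so after even a single rational abstains the honestly participating set has size $h+(k-1)=n-t_0-1<n-t_0$, the phase threshold is not met, a timeout fires, and view-change is again triggered and propagated by Claim~\ref{claim:view-change-properties}. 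Throughout, \emph{Robustness} of the same claim guarantees that $T$ cannot itself force a spurious view-change while the leader is honest, so the two outcomes are exclusive.

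The main obstacle I anticipate is the bookkeeping in the abstaining sub-case, where one must reconcile ``irrespective of the strategy of $T$'' with the view-change conclusion: if the Byzantine players choose to supply the missing signatures, a round with an abstaining rational can still complete, so the careful statement is that the protocol never \emph{forks} and its only two terminal outcomes are agreement on $B_l$ or a consistent view-change. Pinning down that these are the sole possibilities—ruling out a stuck state, or a partial finalisation that a later $>\tfrac{n}{2}$-\emph{final} message could turn into disagreement—is exactly where the tightness of $n=4t_0+1$ and the $n-t_0$ threshold, together with Claim~\ref{claim:view-change-properties}, must be combined most carefully.
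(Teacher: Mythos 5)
Your proposal is correct and follows essentially the same route as the paper's own (largely omitted) proof: a quorum-intersection count at the threshold $n-t_{0}$ with $n = 4t_{0}+1$ (two agreeing quorums would force $|A|+|B| > n$) rules out agreement on two values, and the timeout cases are discharged via the \emph{Consistency}/\emph{Robustness} properties of Claim~\ref{claim:view-change-properties}. Your closing caveat --- that since the claim quantifies over all strategies of $T$, the Byzantine players could supply the signatures missing due to an abstaining rational and let the round complete, so the defensible conclusion is ``agreement on $B_{l}$ or consistent view-change, never a fork'' --- is apt, as the paper's own proof sketch silently assumes non-compensating Byzantine behaviour at exactly this step.
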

\begin{proof}
    The proof follows by showing that any arbitrary network partition of $\mathcal{P}/\mathcal{T}$ does not lead to $\geq 2$ disjoint subsets $A,B$ such that $A\cup T$ and $B \cup T$ are $\geq n - t_{0}$. Therefore, either timeout happens due to insufficient ($< n - t_{0}$) messages in all partitions or agreement is reached in exactly one partition. The complete proof is omitted due to space constraints.
\end{proof}

To prove that \proname\ realizes $(t,k)$\textsf{-robust} RC, we first show through Lemma~\ref{lemma:rational-honest} that any $\mathcal{P}_{i} \in K$ is disincentivized from deviating from the protocol i.e. following any $\pi \not= \pi_{0}$. We finally conclude in Theorem~\ref{thm:main-theorem} that \proname\ is a $(t,k)$\textsf{-robust} RC protocol under the threat model $\mathcal{M}$.

\begin{lemma}\label{lemma:rational-honest}
    For any $\mathcal{P}_{i} \in K$ under threat model $\mathcal{M} = \langle(\mathcal{P},K,T),\theta=1,\lceil\frac{n}{4}\rceil - 1\rangle$ and protocol \proname, following the protocol honestly (i.e. strategy $\pi_{0}$) is dominant strategy incentive compatible (DSIC) for $|K| + |T| < \frac{n}{2}$ and $t < t_{0}$. That is, $U_{i}(\pi_{0},1) \geq U_{i}(\pi,1) \hspace{10pt}\forall\;\pi,\;\forall\;\mathcal{P}_{i} \in K$.
\end{lemma}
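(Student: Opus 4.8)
The plan is to show that a type-$\theta=1$ rational player can obtain the positive per-round payoff $\alpha$ only when the system reaches $\sigma_{Fork}$, and that \proname\ makes this state unreachable (without penalty) whenever the honest players follow $\pi_0$. First I would establish a quorum-intersection fact: since finalization in any phase requires $n-t_0$ matching signatures, two conflicting finalized values $h_a \neq h_b$ would each need a quorum of size $n-t_0$, and in a universe of $n = 4t_0+1$ players two such quorums intersect in at least $n - 2t_0 = 2t_0 + 1 > t_0$ players. Hence any fork forces at least $t_0+1$ players to play $\pi_{ds}$ (sign conflicting values). This is the same counting already used in Claim~\ref{claim:byz-lim} and Claim~\ref{claim:agreement-byz}.

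Next I would invoke the reveal-phase logic together with accountability (Definition~\ref{def:accountability}). Because $h = n - t - k > n/2 > t_0$, there are strictly more than $t_0$ honest players, all following $\pi_0$. Once the network synchronizes, every honest player collects the reveal messages carrying the commit-proofs $W_j$ of the $\geq t_0+1$ double-signers; \textsf{ConstructPoF} then yields a set $D_i$ with $|D_i| > t_0$, so each honest player broadcasts \emph{Expose} rather than \emph{Final}. Consequently conflicting blocks are never jointly finalized, so $\sigma_{Fork}$ is not reached, and the PoF burns the collateral $L$ of every identified double-signer, i.e. $D(\pi_{ds},\sigma) = 1$ for any player whose conflicting signatures are actually used in the two quorums. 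I would also note that the $|F_i| > n/2$ finalization rule cannot produce a fork, since two different finalized blocks would require more than $n$ distinct \emph{Final} messages.

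With these two facts the utility comparison reduces to a round-by-round argument on the discounted sum $U_i(\pi,1) = \sum_r \delta^r u_i(\pi,1,r)$. Fixing any strategy profile $\{s_j\}_{j\neq i}$ of the other players, as required for DSIC, I would compare $\mathcal{P}_i$'s options in each round. Playing $\pi_0$ yields the state payoff with $D=0$; since the honest majority prevents $\sigma_{Fork}$, the realized state is $\sigma_0$ (payoff $0$) in synchrony, and no penalty is ever incurred. Any deviation does weakly or strictly worse: $\pi_{ds}$ cannot create a finalized fork (honest players expose) yet it triggers the penalty $L$, giving payoff at most $0 - L < 0$; and $\pi_{abs}$ can only withhold progress, pushing the realized state toward $\sigma_{NP}$ (payoff $-\alpha < 0$ for $\theta=1$) while never enabling a fork. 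Crucially this holds regardless of what the rest of the collusion does: if they cannot reach $t_0+1$ double-signers without $\mathcal{P}_i$, then $\pi_0$ simply avoids the fork; if they can, the fork is exposed anyway, so $\mathcal{P}_i$ gains nothing by joining and only risks $L$. Summing over rounds gives $U_i(\pi_0,1) \geq U_i(\pi,1)$ for all $\pi$ and all $\{s_j\}$, which is exactly the DSIC condition.

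The main obstacle I expect is the partial-synchrony subtlety in the detection step: before GST an adversary can delay messages so that some honest player momentarily finalizes a tentative block while the conflicting commits are hidden. I would handle this by relying on the tentative/final distinction of Section~\ref{sssec:tentative-final-consensus}, since such tentative blocks are subject to rollback, and on the fact that after GST all reveal messages are delivered, so the $\geq t_0+1$ conflicting signatures are guaranteed to surface, the PoF is constructed, and the collateral is burned. Thus the double-signer cannot escape the penalty across the infinite discounted horizon, which is precisely what the utility bound requires.
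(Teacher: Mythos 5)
Your overall strategy matches the paper's proof: rule out $\sigma_{Fork}$ by quorum counting under $k+t<\frac{n}{2}$ and $t_{0}<\frac{n}{4}$, use the PoF penalty to make double-signing strictly unprofitable, and observe that every other deviation yields at most the honest payoff of $0$ for a type-$\theta=1$ player. There is, however, one genuine soft spot in your route to ``$\sigma_{Fork}$ is never reached.'' You derive it from \emph{detection}: honest players see the $\geq t_{0}+1$ conflicting signatures and broadcast \emph{Expose} instead of \emph{Final}. As you yourself concede, before GST the adversary can schedule deliveries so that an honest player finalizes on a clean-looking quorum without ever seeing the conflicting commits, and your patch --- that such a block is merely ``tentative'' and subject to rollback --- does not apply once a \emph{Final} message has been issued; finalization is exactly the event that must never fork. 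The gap matters quantitatively: if a fork could form and merely be punished afterwards, the deviating player's per-round payoff would be $\alpha - L$, and neither the paper nor your proposal assumes $L \geq \alpha$, so the required inequality $U_{i}(\pi,1)\leq U_{i}(\pi_{0},1)=0$ would not follow from punishment alone.

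The repair is already contained in your first step: push your own intersection count one notch further. Two conflicting quorums of size $n-t_{0}$ intersect in at least $n-2t_{0} = 2t_{0}+1$ players, every one of whom must have double-signed; but honest players never double-sign, and the adversarial budget is $k+t < \frac{n}{2}$, i.e. $k+t \leq 2t_{0} < 2t_{0}+1$. Hence the two quorums cannot coexist, and the fork is impossible at any time, before or after GST, independently of detection. This is exactly the counting the paper performs in its partial-synchrony case (deriving the contradiction $n < k+t+2t_{0}$ against the threat-model bound $k+t+2t_{0}<n$); detection and the PoF are then needed only to make $\pi_{ds}$ strictly worse than $\pi_{0}$ --- the penalty $-L<0$ once signatures surface --- not to protect agreement itself. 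With that one substitution your argument is correct and coincides with the paper's proof, including the closing step that any $\pi\notin\{\pi_{0},\pi_{fork}\}$ yields at most $0$ because a type-$\theta=1$ player gains nothing from $\sigma_{NP}$ or $\sigma_{CP}$.
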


Due to space constraints, we defer the proof to Appendix~\ref{app:rational-honest}. From Lemma~\ref{lemma:rational-honest}, we conclude that any rational player of type $\theta=1$ will follow \proname\ honestly under the threat model $\mathcal{M}$ as described above. Our discussion from Section~\ref{sssec:tentative-final-consensus} implies that if all players in $K$ behave rationally, then all tentative consensus will be converted to final consensus except during timeout due to network asynchrony. We now conclude in Theorem~\ref{thm:main-theorem} that \proname\ realizes $(t,k)$\textsf{-robust} RC. 
\begin{theorem}
\label{thm:main-theorem}    
    \proname\ is a $(t,k)$\textsf{-robust} rational consensus protocol under threat model $\mathcal{M} = \langle(\mathcal{P},T,K),\theta=1,\lceil\frac{n}{4}\rceil - 1\rangle$ for $|K| + |T| < \frac{n}{2}$ under synchronous and partial-synchronous networks. 
\end{theorem}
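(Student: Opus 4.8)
The plan is to verify each of the four properties of Definition~\ref{def:rational-consensus} ---$(t,k)$\textsf{-validity}, $(t,k)$\textsf{-agreement}, $c$\textsf{-strict ordering}, and $(t,k)$\textsf{-eventual liveness}--- under the threat model $\mathcal{M}$. The crucial first move is to invoke Lemma~\ref{lemma:rational-honest}: since following $\pi_{0}$ is a dominant strategy for every $\mathcal{P}_{i}\in K$, all rational players behave honestly. This collapses the analysis to a purely Byzantine setting in which the only potentially deviating players are the $t\le t_{0}=\lceil n/4\rceil-1$ members of $T$, while the remaining $n-t$ players (honest and rational alike) execute \proname\ faithfully. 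Thus it suffices to establish the four properties against an adversary controlling at most $t_{0}$ players.

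First I would establish safety (agreement and validity) through a quorum-intersection argument at the $n-t_{0}$ threshold. Any two commit quorums of size $n-t_{0}$ overlap in at least $2(n-t_{0})-n=n-2t_{0}$ players; since $n>4t_{0}$ this is at least $2t_{0}+1>t_{0}$, so the overlap necessarily contains an honest player. As an honest player never double-signs, no two distinct values can simultaneously collect an $n-t_{0}$ commit quorum in a single round, giving $(t,k)$\textsf{-agreement}. Validity is immediate from the synchronous case of Claim~\ref{claim:agreement-byz}: an honest leader's proposal is voted, committed, and revealed by all $n-t$ non-Byzantine players, clearing every $n-t_{0}$ threshold. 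By the discussion of Section~\ref{sssec:tentative-final-consensus}, because no rational player plays $\pi_{ds}$, the Reveal phase never accumulates more than $t_{0}$ double-signatures, so no honest player is driven to \emph{Expose}, and every tentative consensus is promoted to final consensus.

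Next I would establish $(t,k)$\textsf{-eventual liveness} and $c$\textsf{-strict ordering}. For liveness, the round-robin rule $l=1+(r\bmod n)$ guarantees that after GST an honest leader is eventually elected; by the \emph{Robustness} half of Claim~\ref{claim:view-change-properties}, the Byzantine set alone cannot force a view-change away from an honest leader, while the \emph{Consistency} half ensures that when a view-change does occur all honest players advance together, so progress cannot stall indefinitely. The $>\tfrac{n}{2}$ \emph{Final}-message rule then guarantees that whenever one honest player finalizes $B$, the finalization is witnessed by a majority containing an honest party, and every honest player eventually finalizes $B$ as well. For ordering, the parent-pointer structure of blocks together with the per-round agreement already proved forces the finalized chains of any two honest players to agree up to their common length; since a finalized block (unlike a tentative one) is never rolled back, the common-prefix guarantee yields $c$\textsf{-strict ordering} for the appropriate $c$.

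The main obstacle is the simultaneous control of safety and liveness across the tentative/final split in the presence of a \emph{Byzantine leader}, a case Claim~\ref{claim:agreement-byz} does not cover directly. Here one must argue that a malicious leader can at worst drive honest players to a view-change (never to conflicting \emph{Final} decisions), using the quorum-intersection bound to rule out two final quorums and the \emph{Consistency} of view-change to prevent a partial commit from being finalized by one honest player while another rolls it back. Tying these together---so that tentative blocks may be safely discarded on view-change while finalized blocks are provably permanent and consistently ordered---is the delicate step, and it is precisely where the stronger bound $n>4t_{0}$ (rather than the weaker $n>3t_{0}$) is consumed.
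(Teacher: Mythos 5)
Your proposal follows essentially the same route as the paper's own proof: invoke Lemma~\ref{lemma:rational-honest} to force every $\mathcal{P}_{i} \in K$ onto $\pi_{0}$, so that only the $t \le t_{0}$ Byzantine players can deviate, and then combine Claim~\ref{claim:agreement-byz} (per-round agreement or view-change) with Claim~\ref{claim:view-change-properties} to obtain safety and liveness in synchronous and partially-synchronous networks. If anything, your write-up is more complete than the paper's terse argument --- the explicit quorum-intersection bound $2(n-t_{0}) - n = n - 2t_{0} > t_{0}$, the property-by-property verification of Definition~\ref{def:rational-consensus}, and the treatment of a Byzantine leader (a case Claim~\ref{claim:agreement-byz} formally excludes) are details the paper either relegates to omitted proofs or glosses over entirely.
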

\begin{proof}
    From Lemma~\ref{lemma:rational-honest} we know all $K$ follow $\pi_{0}$. All $H$ follow $\pi_{0}$ by definition of Honest players. Therefore, only deviating players are from the set $T$. From Claim~\ref{claim:agreement-byz} we know that in such a scenario, one of two things happen -- view-change (if the network is in a period of asynchrony), or agreement (if all messages from a particular phase has been delivered. In synchronous settings, agreement is trivially satisfied due to Claim~\ref{claim:agreement-byz}. In partially-synchronous setting, all messages from a round are eventually delivered before the next GST. Therefore, some block $B_{r}$ (and therefore all blocks before $B_{r}$) are finalized  during the period of synchrony. Since $n - t_{0}$ are following $\pi_{0}$ (and not trying to cause censorship attack), \proname\ is a \textsf{strongly} $(t,k)$\textsf{-robust} RC protocol.
\end{proof}

To summarize, we have shown through Lemma~\ref{lemma:rational-honest} that following honest strategy is DSIC for all players $\mathcal{P}_{i} \in K$ in partially-synchronous network. Together with Claim~\ref{claim:agreement-byz} this means under synchronous and partially synchronous networks, \proname\ is \textsf{strongly} $(t,k)$\textsf{-robust} (Theorem~\ref{thm:main-theorem}).
\section{Conclusion \& Future Work}
\label{sec:conclusion}
We analyzed the problem of RC under partially-synchronous network settings. We first relaxed the RC threat model discussed by~\cite{Gramoli2022Trap} by incorporating repeated consensus rounds and discounted utility in each round. In addition, we model types of rational players according to different types $\theta=0,1,2,3$. We showed the impossibility of achieving consensus for $\frac{n}{3} < k + t < \frac{n}{2}$ when players are incentivized to cause either liveness or censorship attacks. In that case, no consensus protocol is \textsf{strongly} $(t,k)$\textsf{robust}. We showed the existence of an insecure Nash Equilibrium in the consensus protocol (TRAP) explained by~\cite{Gramoli2022Trap}. This insecure equilibrium point is preferred over the equilibrium point discussed in TRAP (Section~\ref{ssec:eq-ic}). 

We proposed \proname, a protocol to achieve Ratoinal consensus for $k + t < \frac{n}{2}$ and $t_{0} < \frac{n}{4}$. We proved \textsf{pRFT} security under partially synchronous network and is \textsf{strong} $(t,k)$\textsf{-robust} RC protocol. \proname\ also achieves message complexity $O(n^{3})$ and message size equal to $O(\kappa\cdot n^{4})$. Through this work, we extend the understanding of the RC protocol that does not rely on baiting to achieve consensus. 

\noindent \textbf{Future Work.} We think closing the gap between the impossibilities and guarantees by \textsf{pRFT} is an interesting direction for future work. In addition, improvements on \textsf{pRFT} via use of succinct knowledge f arguments~\cite{Chen2022SNARKS,Ephraim2020SPARK} is left for future work. 

\if 0
\subsection*{Future Work} We leave proving the existence or impossibility of RC protocol if $\theta=1,$ when  $t + k < \frac{n}{2}$ and $\frac{n}{4} \leq t_{0} < \frac{n}{3}$ for future work. Note that we proved the impossibility of consensus protocols for these values of $t,k,t_0$ if $\theta=2,3$.  
Reduction of communication complexity by a factor of $n$ can be explored by employing the use of succinct knowledge of arguments such as SNARKS~\cite{Chen2022SNARKS} and SPARKS~\cite{Ephraim2020SPARK}. 

In general, in security games, a Nash Incentive compatible (NIC) protocol can be insecure if a better equilibrium point exists which is not secure. As exhibited by our work and~\cite{Amoussou2020AAMASRCEquilibrium} multiple equilibrium points show that NIC might not be a sufficient condition for the Game Theoretic security of distributed protocols. We leave it for future work to explore insecurities in other protocols that ensure NIC and develop a generalized framework to capture sufficient conditions towards the rational security of distributed protocols and also explore DSIC protocols, as proposed in this work.
\fi
\newpage

\bibliographystyle{unsrtnat}

\newpage
\appendix

\section{Definitions}
\label{app:definitions}
\subsection{Byzantine Agreement}
\begin{definition}[Byzantine Agreement~\cite{blum2021tardigrade} (BA)]
    A protocol $\Pi$ is run by a set of players is $\mathcal{P} = \{\mathcal{P}_{1},\mathcal{P}_{2}\ldots,\mathcal{P}_{n}\}$ and each player $\mathcal{P}_{i}$ initially has value $v_{i}$. This protocol $\Pi$ solves \emph{byzantine agreement} $t$\textsf{-securely} if it satisfies:
    \begin{itemize}[leftmargin=8pt]
        \item \textbf{Agreement:} All altruistic players decide on the same value. 
        \item \textbf{Validity:} If all altruistic players have the same input value $v$ then they agree on the same value $v$.
        \item \textbf{Termination:} All honest players eventually decide on some value and protocol $\Pi$ terminates. 
    \end{itemize}
\end{definition}
Byzantine Agreement (BA) and Byzantine Broadcast (BB) are equivalent problems in the domain of distributed cryptographic protocols. This means that we can use BA (and equivalently BB) as a black box to implement BB (and equivalently BA). BA/BB is possible only if $t < n/3$ in asynchronous and partially-synchronous networks and $t < n/2$ under synchronous network settings~\cite{blum2021tardigrade, LSP1982}.

\subsection{Atomic Broadcast}
\begin{definition}[Atomic Broadcast~\cite{blum2021tardigrade} (ABC)]
    A protocol $\Pi$ which is executed by players $\mathcal{P} = \{\mathcal{P}_{1},\mathcal{P}_{2}\ldots,\mathcal{P}_{n}\}$ who are provided with transactions and maintain a list (chain) of \textsf{Blocks} implements \emph{Atomic Broadcast} $t$\textsf{-securely} if it satisfies the following conditions:
    \begin{itemize}[leftmargin=8pt]
        \item[-] $t$\textsf{-completeness} i.e. if $\leq t$ players are corrupted, then $\forall j > 0$, each honest player outputs block in \emph{iteration} $j$.
        \item[-] $t$\textsf{-consistency} i.e. if $\leq t$ players are corrupted, then $\forall j > 0$ if one honest player outputs \textsf{Block} $B$ in iteration $j$ then all honest players output $B$.
        \item[-] $t$\textsf{-liveness} i.e. if $\leq t$ players are corrupted, then if all honest players have transaction $tx$ as input, then all honest players eventually output a block containing $tx$.
    \end{itemize}
\end{definition}

Atomic Broadcast is the abstraction that is realized by Blockchain protocols. While randomized and probabilistic consensus protocols such as Proof-of-Work (PoW)~\cite{nakamoto2009Bitcoin,garay2015Bitcoin}, Proof-of-Stake (PoS)~\cite{Micali2017Algorand} among others, solve ABC with high probability, protocols such as pBFT~\cite{CastroLiskov1999pBFT}, Honeybadger~\cite{Miller2016Honeybadger} etc. deterministically solve ABC. 

\subsection{Flavours of Synchrony}
\label{ssec:prelims-synchrony}

Consensus and Broadcast problems are studied for a set of players connected through a network. This network is either assumed as \emph{synchronous, asynchronous} or \emph{partially-synchronous}. 

\begin{itemize}
    \item[-] \textbf{Synchronous:} A network is fully-synchronous (or just \emph{synchronous}) if for any message sent from sender $S$ to receiver $R$ reaches within some delay which has an upper bound $\Delta_{sync}$ known to the protocol in advance. Therefore, protocols that work under a synchronous network can be parameterized using this delay parameter $\Delta_{sync}$.
    \item[-] \textbf{Asynchronous:} A network is asynchronous if, for any message from sender $S$ to receiver $R$, the delay has no upper bound but is a finite value. This would mean each message is guaranteed to be delivered, but there exists no upper bound on the delay. 
    \item[-] \textbf{Partially-Synchronous:} Partial-synchronous networks are intermediaries between synchronous networks -- which are difficult to realize and asynchronous  networks -- under which designing protocols is challenging. Partially-Synchronous networks were first discussed by Dwork et al.~\cite{Cynthia1988PartialSynchrony} and are defined as a network having some finite delay $\Delta_{ps}$ set by the adversary and is not known to the protocol. Therefore, a protocol satisfying consensus under partial synchrony cannot be parameterized using $\Delta_{ps}$ and should satisfy any $\Delta_{ps} \in \mathbb{R}_{> 0}$. However, the protocols can use the existence of a finite upper bound to network delay to realize functionalities that were difficult (or impossible) in asynchronous settings. 
\end{itemize}

Consensus through a \emph{deterministic protocol} is impossible under asynchronous network settings in the presence of even a single faulty party~\cite{Fischer1985ImpossibilityAsync}. Consensus under a synchronous network is possible using the deterministic protocol as long as the majority is honest, i.e. byzantine players $< \frac{n}{2}$. Under partial synchrony, consensus is possible if byzantine players are $< \frac{n}{3}$. We discuss RC in Partially-Synchronous network settings and aim to propose a more relaxed adversary model with $t$ byzantine and $k$ rational players and exploit rationality of $k$ players to achieve consensus in $k + t < \frac{n}{2}$.

\section{Proof for Theorem~\ref{thm:r3-impossible}}
\label{app:r3-impossible}

\begin{theorem}[Rational Consensus under $\theta = 3$]
    Under the threat model $\langle(\mathcal{P},T,K),\theta=3,t_{0}\rangle$ no rational consensus protocol is $(t,k)$\textsf{-robust} when $\lceil\frac{n}{3}\rceil \leq k + t \leq \lceil\frac{n}{2}\rceil - 1$.
\end{theorem}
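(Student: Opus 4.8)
The plan is to argue by contradiction: suppose some protocol $\Pi$ is $(t,k)$\textsf{-robust} under $\langle(\mathcal{P},T,K),\theta=3,t_0\rangle$, and let $\tau$ be its agreement threshold. By Claim~\ref{claim:byz-lim} we must have $\tau \in [\lfloor\frac{n+t_0}{2}\rfloor + 1,\, n - t_0]$. Writing $h := n - (k+t)$ for the number of honest players, the key observation is that the collusion $K \cup T$ has two qualitatively different levers: \emph{abstention}, which is undetectable and attacks liveness, and a \emph{partition together with equivocation}, which attacks agreement; and $k+t \ge \lceil n/3\rceil$ is exactly the regime in which these two attacks jointly cover every admissible value of $\tau$. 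I would therefore split on whether the honest players alone can meet the threshold.

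First, the case $\tau > h$. Here all of $K \cup T$ play $\pi_{abs}$, so at most $h < \tau$ votes are ever cast on any value and no block is finalized; the system stays in $\sigma_{NP}$. Since $\theta=3$ gives payoff $\alpha$ in $\sigma_{NP}$ and $0$ in $\sigma_0$, and since under partial synchrony (a fortiori under asynchrony) abstention is indistinguishable from crash faults or delayed messages, we have $D(\pi_{abs},\sigma)=0$ and no penalty can be levied. Hence each rational player strictly prefers $\pi_{abs}$, with $U_i(\pi_{abs},3)=\alpha>0=U_i(\pi_0,3)$, so the rational equilibrium violates $(t,k)$\textsf{-eventual liveness}.

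Second, the case $\tau \le h$. From $k+t \ge \lceil n/3\rceil$ one gets $\tau \le h = n-(k+t) \le \tfrac{2n}{3} \le \tfrac{n+(k+t)}{2}$, hence $2(\tau-(k+t)) \le h$. I would then invoke the asynchronous period (before GST) to partition the honest set $H$ into two nonempty disjoint groups $A,B$ with $|A|,|B|\ge \tau-(k+t)$ and $|A|+|B|=h$, communicating only through the colluders, and have every member of $K\cup T$ equivocate, signing $v_a$ toward $A$ and $v_b \ne v_a$ toward $B$. Then $A$ sees $|A|+(k+t)\ge\tau$ votes for $v_a$ and $B$ sees $|B|+(k+t)\ge\tau$ for $v_b$, so two honest players finalize conflicting blocks at the same height: the system reaches $\sigma_{Fork}$, again worth $\alpha$ to a $\theta=3$ player, violating $(t,k)$\textsf{-agreement}. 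Since every admissible $\tau$ falls into exactly one of the two cases, no $(t,k)$\textsf{-robust} protocol can exist.

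The main obstacle I anticipate is the incentive bookkeeping in the second case. Unlike abstention, equivocation ($\pi_{ds}$) is detectable via conflicting signatures, so one must argue that the rational players still (weakly) benefit, either because the agreement violation is already irreversible once two honest parties finalize conflicting blocks (so robustness fails regardless of any post hoc penalty), or because the ongoing partition prevents the proof-of-fraud from being finalized and the collateral burned before the damage is done. Getting this argument clean, and handling the boundary cases $\tau \le k+t$ and the requirement $|A|,|B|\ge 1$ for the partition sizes, is where the care is needed; by contrast the first case is immediate from the indistinguishability of $\pi_{abs}$.
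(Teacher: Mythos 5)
Your Case~1 is precisely the paper's proof (Appendix~\ref{app:r3-impossible}): the whole collusion $K \cup T$ plays $\pi_{abs}$, abstention is indistinguishable from network delay so $D(\pi_{abs},\sigma)=0$, and the $\theta=3$ payoff table gives $U_{i}(\pi_{abs},3)=\alpha>0=U_{i}(\pi_{0},3)$, violating $(t,k)$\textsf{-eventual liveness}. In fact the paper proves \emph{only} this case: from Claim~\ref{claim:byz-lim} it asserts that ``the message/signature of at least one player in $K \cup T$ is required,'' i.e.\ it silently assumes $\tau > h$. That inference needs $k+t \ge \lfloor n/2\rfloor$ and can fail in the theorem's stated range: with $n=12$, $t_{0}=1$, $k+t=4=\lceil n/3\rceil$, Claim~\ref{claim:byz-lim} permits $\tau=7 \le 8 = h$, and then abstention harms nothing. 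So your decision to split on whether honest players alone can meet the threshold targets a real soft spot that the paper's own proof glosses over, and your feasibility arithmetic in Case~2 (that $2(\tau-(k+t))\le h$, so a partition with $|A|,|B|\ge \tau-(k+t)$ exists) is correct.

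The genuine gap is the one you flag yourself, and it is not a technicality you can defer: in Case~2 you show the equivocation attack is \emph{feasible}, but non-robustness requires showing it is what rational players \emph{prefer}, and within the paper's utility model that preference cannot be derived. Unlike $\pi_{abs}$, double-signing produces transferable evidence; the partition only delays Proof-of-Fraud construction until after GST, it does not prevent it, so the term $-L\cdot D(\pi_{ds},\sigma)$ enters the deviator's utility. Whether $U_{i}(\pi_{ds},3)>0=U_{i}(\pi_{0},3)$ then depends on comparing the discounted fork gain (roughly $\alpha/(1-\delta)$) against the collateral $L$ --- quantities the theorem does not relate, and $L$ is chosen by the protocol. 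A protocol that demands a large enough collateral defeats exactly this deviation; indeed that is the mechanism by which \proname\ obtains DSIC against $\pi_{fork}$ in Lemma~\ref{lemma:rational-honest}. Your fallback (``the agreement violation is irreversible, so robustness fails regardless of any post hoc penalty'') conflates feasibility with equilibrium play: robustness fails only if rational players actually execute the attack, so a penalty that makes it ex ante unprofitable does rescue the protocol. Closing Case~2 would require either an explicit parameter assumption such as $\alpha/(1-\delta)>L$, or an argument that penalties cannot be consistently enforced on a forked ledger; neither appears in your write-up (nor in the paper, whose proof is simply silent on this regime).
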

\begin{proof}
    Consider a protocol $\Pi$ belonging to the set of protocols $\mathcal{C}_{3}$ that achieve consensus under threat model $\mathcal{M} = \langle(\mathcal{P},T,K),\theta=3,t_{0}\rangle$. Let $n = |\mathcal{P}|$ therefore, $|T| + |K| < \frac{n}{2}$. In such a case, consider the threshold of messages required for agreement by the protocol to be $\tau$. From Claim~\ref{claim:byz-lim} we have $\tau > \lfloor \frac{n + t_{0}}{2} \rfloor \geq \lceil \frac{n}{2} \rceil$ (since $t_{0} \geq 1$). This means that for consensus, the message/signature of at least one player in $K \cup T$ is required. In this case, if each player $\mathcal{P}_{l} \in K \cup T$ follows strategy $\pi_{abs}$ -- not sending any messages, then consensus is not reached. Since abstaining from sending messages in a round is indistinguishable from message delays due to partially-synchronous networks, $\pi_{abs}$ cannot be distinguished from $\pi_{0}$ under partially-synchronous network. Therefore, $D(\pi_{abs},\sigma) = 0$. The utility for following $\pi_{abs}$ for each rational player is
    \[
        \begin{aligned}
        U_{i}(\pi_{abs},\theta=3) = & \sum_{r=0}^{\infty} \delta^{r}u_{i}(\pi_{abs},3,r) \\ 
        = & E_{\sigma\sim\;S}[f(\sigma,3)] - 0 \\
        = & \alpha > 0 = U_{i}(\pi_{0},\theta=3)
        \end{aligned}
    \]
    Therefore, each player in set $K$ is incentivized to play $\pi_{abs}$ over following protocol which compromises $(t,k)$\textsf{-eventual liveness} property and therefore any such arbitrary $\Pi$ is not a $(t,k)$\textsf{-robust} Rational Consensus protocol.
    \end{proof}

\section{Proof for Theorem~\ref{thm:r2-impossible}}
\label{app:r2-impossible}

\begin{theorem}[Rational Consensus under $\theta=2$]
    Under threat model $\langle(\mathcal{P},T,K),\theta = 2,t_{0}\rangle$ no rational consensus protocol is \textsf{strongly} $(t,k)$\textsf{-robust} when $\lceil\frac{n}{3}\rceil \leq t + k \leq \lceil\frac{n}{2}\rceil - 1$.
\end{theorem}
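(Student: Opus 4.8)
The plan is to mirror the structure of the $\theta=3$ impossibility (Theorem~\ref{thm:r3-impossible}) but with a refined adversarial strategy that preserves liveness while still defeating censorship resistance, since a $\theta=2$ player is penalized (payoff $-\alpha$) whenever the system halts in $\sigma_{NP}$ and therefore cannot simply abstain in every round. First I would fix a transaction $tx_h$ that is input to all honest players by some round $r_0$, and define a leader-dependent strategy $\pi_{pc}$ (partial-censorship) for the collusion $K \cup T$: in any round $r \geq r_0$ whose leader $\mathcal{P}_l \notin K \cup T$, every colluding player follows $\pi_{abs}$; in any round whose leader $\mathcal{P}_l \in K \cup T$, the collusion behaves honestly but proposes and votes on a block whose transaction set excludes $tx_h$.

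Next I would establish incentive compatibility for type $\theta=2$. By Claim~\ref{claim:byz-lim} any protocol needs a threshold $\tau \geq \lfloor (n+t_0)/2 \rfloor + 1$, so in honest-leader rounds the abstention of the $k+t \geq \lceil n/3 \rceil$ colluders starves every partition of the required signatures and prevents confirmation, exactly as in Theorem~\ref{thm:r3-impossible}. Because leadership rotates, however, a $(k+t)/n$ fraction of rounds has an adversarial leader who does confirm a block (with $tx_h$ omitted); hence the system is never permanently in $\sigma_{NP}$ and $(t,k)$\textsf{-eventual liveness} is not violated. Since the collusion never signs two conflicting messages, no Proof-of-Fraud is ever constructed; and because $\pi_{abs}$ is indistinguishable from crash faults and network delay under partial synchrony while selective transaction inclusion is a leader's prerogative, the penalty term satisfies $D(\pi_{pc}) = 0$. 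The resulting state is $\sigma_{CP}$, so each $\mathcal{P}_i \in K$ collects payoff $+\alpha$ per round against $0$ under $\pi_0$, giving $U_i(\pi_{pc},\theta=2) > U_i(\pi_0,\theta=2)$.

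Finally I would show $\pi_{pc}$ defeats strong robustness for \emph{any} protocol. The transaction $tx_h$ is input to all honest players at $r_0$, yet in every honest-leader round no block is confirmed and in every adversarial-leader round the confirmed block omits $tx_h$; consequently no honest player ever outputs a block containing $tx_h$, violating $(t,k)$\textsf{-censorship resistance}, and since no assumption on the protocol was used, no protocol is \textsf{strongly} $(t,k)$\textsf{-robust} in this regime. The main obstacle is precisely the incentive-compatibility step: I must rule out that an accountability mechanism penalizes the collusion, which rests on two indistinguishability claims — that abstention is inseparable from crashes and delays in a partially-synchronous network, and that a leader's omission of a pending transaction is not by itself a certifiable fault — so that $D(\pi_{pc}) = 0$ holds robustly and the censorship equilibrium survives even under threshold-encryption schemes such as~\cite{Miller2016Honeybadger}.
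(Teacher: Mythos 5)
Your proposal is correct and follows essentially the same route as the paper's own proof (Appendix~\ref{app:r2-impossible}): the identical partial-censorship strategy $\pi_{pc}$ (abstain under honest leaders, propose blocks omitting $tx_h$ under colluding leaders), the same incentive argument that liveness is preserved, no double-signing occurs, and $D(\pi_{pc})=0$ by indistinguishability from crash faults so that $U_i(\pi_{pc},\theta=2) > U_i(\pi_0,\theta=2)$, and the same conclusion that $(t,k)$\textsf{-censorship resistance} fails for any protocol. The only difference is presentational — you flag the indistinguishability claims and the threshold-encryption caveat as the load-bearing steps, which the paper asserts more briefly — but the mathematical content coincides.
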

\begin{proof}
The proof follows by showing a strategy followed by $\mathcal{P}_{i} \in K \cup T$ that is incentivized for rational players and following this strategy, for any protocol $\Pi$ it is impossible to achieve \textsf{strongly} $(t,k)$\textsf{-robust} rational consensus. In Step 1, we describe the strategy, in Step 2, we show that rational players are incentivized to follow this strategy and in Step 3 we show that for any protocol $\Pi$ it is impossible to achieve rational consensus under this strategy.

\smallskip \noindent \underline{Step 1 (Strategy $\pi_{pc}$):} From Theorem~\ref{thm:r3-impossible} we have that for any protocol $\Pi$ with $\tau \geq \lfloor\frac{n + t_{0}}{2}\rfloor + 1$ in any round $r$, the collusion $K\cup T$ can cause disagreement by following $\pi_{abs}$. Let us consider a transaction $tx_{h}$ which is input to all honest players by round $r_{0}$. The strategy which $\mathcal{P}_{i} \in K \cup T$ follows for round $r \geq r_{0}$ is:
\begin{itemize}
    \item[-] If leader in round $r$ is $\mathcal{P}_{l} \not\in K \cup T$ then follow $\pi_{abs}$.
    \item[-] If leader in round $r$ is $\mathcal{P}_{l} \in K \cup T$ then propose \textsf{Block} with transaction set $\overline{tx}$ such that $tx_{h} \not \in \overline{tx}$.
\end{itemize}
We abbreviate this strategy as $\pi_{pc}$ (\textbf{p}artial-\textbf{c}ensorship).

\smallskip \noindent\underline{Step 2 (Incentive Compatibility):} We now show that following $\pi_{pc}$ is incentivized for rational players $\mathcal{P}_{i} \in K$. We first make a simple observation that in from round $r_{0}$ to $r_{0} + n - 1$, in expectation there will be $k + t$ blocks mined (when the leader is $\mathcal{P}_{l} \in K \cup T$). Therefore, the protocol achieves $(t,k)$\textsf{-eventual liveness}. In addition, since there are no conflicting values proposed in any round, no disagreement is reached. The rational players are of type $\theta=2$ which means their utility from round $r_{0}$ onwards is given by 
\[
    U_{i}(\pi,\theta=2) = \sum_{r=r_{0}}^{\infty} \delta^{r}\left(\mathbb{E}[\alpha f(\sigma,2)] - L\cdot D(\pi)\right)
\]
Since system will not be in state $\sigma_{NP}$, the payoff from $\mathbb{E}[\alpha_{2}f_{2}] > 0$ if probability $Pr(\sigma = \sigma_{CP}) > 0$ (according to our strategy $Pr(\sigma = \sigma_{CP}) = 1$). In addition, since there are no duplicate messages signed, and players do not crash forever, following $\pi_{pc}$ is indistinguishable from following $\pi_{0}$ which means $D(\pi_{pc}) = D(\pi_{0}) = 0$. This means $\forall \mathcal{P}_{i} \in K \cup T$
\[
    U_{i}(\pi_{pc},\theta=2) > U_{i}(\pi_{0},\theta=2)
\]
The set $K \cup T$ is therefore incentivized to deviate from the honest protocol $\pi_{0}$ to follow $\pi_{pc}$ for any protocol $\Pi$.

\noindent\underline{Step 3 (No \textsf{strongly} $(t,k)$\textsf{-robust} Rational Consensus:} We now argue that if $K \cup T$ follows $\pi_{pc}$ for any consensus protocol then it is impossible to achieve \textsf{strongly} $(t,k)$\textsf{-robust} rational consensus for any $\frac{n}{3} \leq k + t < \frac{n}{2}$. Consider any round $r \geq r_{0}$. In this round, if leader $\mathcal{P}_{l} \in K \cup T$ the leader selectively includes transactions in transaction set $\overline{tx}$ such that $tx_{h} \not\in \overline{tx}$. If $\mathcal{P}_{l} \not\in K \cup T$ the coalition causes view change without agreement on a block. Thus, any block confirmed (and thus included) doesnot contain transactoin $tx_{h}$ although the transaction is input to all honest players at round $r_{0}$. This violates the $(t,k)$\textsf{-censorship resistance} and therefore for any arbitrary protocol\footnote{we did not assume any property about the protocol} $\Pi$ achieving \textsf{strongly} $(t,k)$\textsf{-robust} rational consensus is impossible.
\end{proof}

\section{Proof for Theorem~\ref{thm:trap-impossible}}
\label{app:trap-impossible}    
\begin{theorem}[Baiting based Rational Consensus under $\theta = 1$]
    Consider any baiting-based rational consensus protocol $\Pi$ the threat model $\mathcal{M} = \langle (\mathcal{P},K,T),\theta=1,t_{0}\rangle$. The set of rational players following $\pi_{fork}$ is a Nash-equilibrium strategy for $|K| > 2 + t_{0} - t$. Under this strategy, the protocol $\Pi$ fails to achieve $(t,k)$\textsf{-robust} rational consensus for $t_{0} = \lceil\frac{n}{3}\rceil - 1$.  
\end{theorem}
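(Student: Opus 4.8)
The plan is to exhibit, for an arbitrary baiting-based protocol $\Pi$, a strategy profile in which every member of the collusion $K \cup T$ repeatedly double-signs to fork, and to show this profile is a Nash equilibrium whose outcome is disagreement. Following~\cite{Gramoli2022Trap}, I would recall the relevant parameters: a successful fork yields the collusion a net gain $G$ split as $\frac{G}{k}$ per rational colluder; a baiter collects the reward $R := f(\sigma_{fork},1)$ only when its Proof-of-Fraud actually prevents the fork (i.e. the round ends in $\sigma_{0}$), and otherwise receives nothing; and by Claim~\ref{claim:byz-lim} the agreement threshold satisfies $\tau \in [\lfloor \frac{n+t_{0}}{2}\rfloor + 1, n - t_{0}]$. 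Define $\pi_{fork}$ as the grim-trigger collusion strategy: in every round each colluder double-signs two conflicting proposals across a fixed balanced honest partition $(A,B)$ with $|A| = |B| = \frac{n-k-t}{2}$, and the instant any colluder baits the collusion reverts to $\pi_{0}$ forever. I would first check that under $\pi_{fork}$ both $A \cup T \cup K$ and $B \cup T \cup K$ meet $\tau$, so two honest players in $A$ and $B$ finalize conflicting blocks at the same height, placing the system in $\sigma_{Fork}$.

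The core of the argument is that no single rational colluder profits by deviating from $\pi_{fork}$. Against $\pi_{fork}$ the candidate deviations are $\pi_{bait}$, $\pi_{abs}$, and $\pi_{0}$, each of which withdraws the deviator's double-signatures. The key estimate is a counting bound: if $m$ colluders defect, a partition fails to reach $\tau$ only when $|A| + (k-m) + t < \tau$. Taking the worst case $\tau = n - t_{0}$ (which makes the fork hardest to sustain, hence most favorable to the baiter) and substituting $|A| = \frac{n-k-t}{2}$ gives the threshold $m > t_{0} + \frac{k+t-n}{2}$. Under the hypothesis $|K| > 2 + t_{0} - t$, equivalently $k + t > t_{0} + 2$, together with $t_{0} = \lceil \frac{n}{3}\rceil - 1$ (so $n = 3t_{0}+1$ in the worst case), this right-hand side is at least $1$, so $m = 1$ fails the strict inequality and the fork is \emph{not} prevented: the remaining $k + t - 1 \geq t_{0} + 1$ equivocators still drive both partitions past $\tau$ and the round still ends in $\sigma_{Fork}$. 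Consequently a lone baiter never reaches $\sigma_{0}$, collects no reward, and forgoes its fork share, giving stage payoff $0$; abstaining or playing honestly likewise yields $0$ for a type $\theta = 1$ player. Since staying in $\pi_{fork}$ yields $\frac{G}{k} > 0$ each round while the punishment phase only lowers every future payoff to $0$, the one-shot deviation gain is strictly negative, and summing the discounted stage payoffs confirms $\pi_{fork}$ is a (subgame-perfect) Nash equilibrium for all $k + t > t_{0} + 2$.

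Finally I would conclude the robustness failure. The outcome of $\pi_{fork}$ is $\sigma_{Fork}$, in which two honest players hold conflicting finalized blocks at the same height, directly violating $(t,k)$\textsf{-agreement}; hence $\Pi$ is not $(t,k)$\textsf{-robust} whenever rational players settle on this equilibrium. To argue they do, I would invoke the focal-point discussion of Section~\ref{ssec:eq-ic}: the fork equilibrium Pareto-dominates the baiting equilibrium of~\cite{Gramoli2022Trap} for the rational players, since each earns $\frac{G}{k}$ per round rather than a single randomly-awarded $R$ contingent on the coordinated baiting of $m > \frac{t+k-n}{2} + t_{0}$ players, so it is the more attractive and more stable point. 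Specializing to $t_{0} = \lceil \frac{n}{3}\rceil - 1$ then yields the stated claim.

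The hardest part is the counting estimate together with its modeling assumptions. I must argue carefully that removing a single defector's signatures still leaves $\geq t_{0}+1$ equivocators, so the fork is self-sustaining, and that the reward is genuinely unavailable once the fork succeeds --- this is precisely where the protocol's own incentive design backfires, and where the inequality $k + t > t_{0} + 2$ must be reconciled exactly against the partition sizes and the worst-case $\tau = n - t_{0}$. A secondary subtlety is justifying the equilibrium-selection step, since Nash equilibrium alone does not force rational players onto the insecure point; the Pareto-dominance/focal argument is needed to make the robustness failure convincing rather than merely possible.
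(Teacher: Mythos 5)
Your proposal is correct and follows essentially the same route as the paper's own proof (Appendix~\ref{app:trap-impossible}): the same payoff comparison between the colluder's share $\frac{G}{k}$ and the baiter's reward $R$ contingent on reaching $\sigma_{0}$, the same balanced partition with $|A| = |B| = \frac{n-k-t}{2}$, the same counting bound $m > t_{0} + \frac{k+t-n}{2}$ showing a unilateral baiter ($m=1$) cannot prevent the fork when $k + t > t_{0} + 2$ and $n = 3t_{0}+1$, and the same treatment of $\pi_{abs}$ and $\pi_{0}$ as zero-payoff deviations. Your additions --- the explicit grim-trigger repeated-game framing and the focal-point/Pareto-dominance equilibrium-selection argument --- match the paper's main-text discussion rather than departing from it.
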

\begin{proof}
    To prove this, we show that the payoff for a rational player $\mathcal{P}_{i}$ on following $\pi_{bait}$ is lesser than the payoff if they follow $\pi_{Fork}$ (which is the strategy followed by the collusion). The net gain in payoff for the collusion $K \cup T$ is $G$ if the system ends up in state $\sigma_{Fork}$ which is distributed among the rational players. If the player $\mathcal{P}_{i} \in K$ follows $\pi_{bait}$ then she is not a part of the collusion and will get $0$ payoff if the system ends up in $\sigma_{Fork}$. 

    The payoff for $\mathcal{P}_{l}$ on following $\pi_{Fork}$ along with the rest of the coalition $K \cup T$ is therefore 
    \[
        U_{i}(\pi_{Fork},3) = \frac{G}{k}
    \]
    Consider for $|K| > 2 + t_{0} - t$ i.e. $k + t > 2 + t_{0}$. If $\mathcal{P}_{l}$ deviates from the coalition $K \cup T$ to follow $\pi_{bait}$. The system can end up in two possible states. $\sigma_{Fork}$ is if the rest of the collusion $T \cup K /\{\mathcal{P}_{l}\}$ is able to create a fork/disagreement despite $\mathcal{P}_{l}$ following $\pi_{bait}$. In this case, the utility for $\mathcal{P}_{l}$ is $0$ (since they were not a part of the collusion). Second is if the Proof-of-Fraud submitted by $\mathcal{P}_{l}$ is accepted and the protocol functions normally $\sigma_{0}$. The payoff in this case is 
    \[
        U_{i}(\pi_{bait},3) = R\cdot\;Pr(\sigma=\sigma_{0}) + 0\cdot\;Pr(\sigma=\sigma_{Fork})
    \]
    However, if $k + t > 2 + t_{0}$ then we can achieve disagreement even if $1 + t_{0}$ players follow $\pi_{Fork}$ (since $t_{0} + 1 = \lceil\frac{n}{3}\rceil$). This can be under the partition of $\mathcal{P} / (T \cup K/\{\mathcal{P}_{i}\})$ into two disjoint sets $A,B$ such that $|A|$ and $|B|$ are such that $|A| + k + t \geq \tau$ (and similarly $|B|$). For forking, we have to ensure each partition has at $n - t_{0}$ messages. If $m$ players from the collusion deviates to follow $\pi_{bait}$ the condition for system to \textbf{not} end in $\sigma=\sigma_{Fork}$ is 
    \[
    \begin{aligned}
        |A| + (k - m) + t &< \tau < n - t_{0}\\
        m > |A| + k + t - n + t_{0}
    \end{aligned}
    \]
    We have $n - |B| = |A| + k + t$ and $|B| = \frac{n - t - k}{2}$. Therefore, 
    \[
    \begin{aligned}
        m > t_{0} + \frac{k + t - n}{2}
    \end{aligned}
    \]
For $k > 3 > (3t_{0} + 1) - 2t_{0} - t + 2 = n - 2t_{0} - t + 2$, we have from algebraic reordering the RHS of inequality $> 1$. Thus, any unilateral deviation ($m = 1$) is not sufficient to avoid $\sigma_{Fork}$. Therefore, $Pr(\sigma=\sigma_{Fork}) = 1$ and $Pr(\sigma=\sigma_{0}) = 0$. The utility is therefore $0$ which gives us $U_{i}(\pi_{Fork},3) > U_{i}(\pi_{bait},3)$. Following $\pi_{abs}$ will also lead to $\mathcal{P}_{i}$, not part of $K \cup T$ and therefore the payoff is $0$. Following $\pi_{0}$ leads to payoff $0$ for any system state. Thus, following $\pi_{Fork}$ is Nash Equilibrium strategy for $\mathcal{P}_{l} \in K$.
\end{proof}  

\section{Proof for Claim~\ref{claim:view-change-properties}}
\label{app:view-change-properties}
\begin{claim}
    The \emph{view-change} sub-protocol of \proname\ satisfies both \emph{Consistency} and \emph{Robustness}.
\end{claim}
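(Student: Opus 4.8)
The plan is to prove the two properties separately, relying on the worst-case quorum size $n - t_0 = 3t_0 + 1$ (with $n = 4t_0 + 1$ and $t \le t_0$) and on unforgeability of signatures, which prevents $T$ from fabricating \emph{view-change} or \emph{commit-view} messages on behalf of non-byzantine players. For \textbf{Robustness} I would observe that a view change can proceed only once some player assembles a valid \emph{commit-view} certificate, which by Steps 3--4 must contain $n - t_0$ distinct view-change signatures. I would then argue that, with an honest leader $\mathcal{P}_l \in H$ during a period of synchrony, none of the three Step-1 triggers can fire at a non-byzantine player: the honest leader broadcasts a single proposal on time, which rules out leader equivocation and also rules out timeout (since the $n - t \ge n - t_0$ non-byzantine responses already suffice to advance every phase without help from $T$); and the ``conflicting signatures by $\ge t_0 + 1$ players'' trigger is unreachable because only the $t \le t_0$ byzantine players can double-sign. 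Consequently at most $t \le t_0 < n - t_0$ view-change messages ever exist, so $T$ alone can never form a commit-view certificate, and no round advancement occurs.

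For \textbf{Consistency} I would split the statement into a safety part (no honest player reaches agreement in round $r$) and a totality part (every honest player eventually commits). The safety part is a quorum-intersection argument: reaching agreement in round $r$ rests on a quorum $Q_a$ of $\ge n - t_0$ matching commit/final messages, while committing to view-change rests on a quorum $Q_v$ of $\ge n - t_0$ view-change messages. Since $|Q_a| + |Q_v| \ge 2(n - t_0)$ and $|Q_a \cup Q_v| \le n$, the quorums intersect in at least $2(n - t_0) - n = n - 2t_0 = 2t_0 + 1 > t_0 \ge t$ players, so some non-byzantine player lies in both. As an honest player never both advances agreement and requests a view-change for the same round, this is a contradiction, so agreement and a view-change commit cannot coexist in round $r$.

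For the totality part I would use the echoing behaviour of commit-view messages together with reliable broadcast. By the protocol the committed honest player $\mathcal{P}_i$ has broadcast a commit-view message carrying a valid $n - t_0$-view-change certificate, and this holds whether it committed via Step 3 or Step 4, since both forward such a certificate. By reliable broadcast every honest $\mathcal{P}_j$ eventually receives this message, validates the certificate, and by Step 4 itself commits to view-change and re-broadcasts a commit-view; hence all honest players commit, establishing Consistency. Round advancement (Step 5) then follows once $> n - t_0$ commit-views accumulate.

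The main obstacle I anticipate is the tightness of the thresholds. The intersection bound $n - 2t_0 > t$ is exactly what forces $t < n/4$ here, and I would need to check it survives the stated bound $t < t_0 = \lceil n/4 \rceil - 1$ and not merely the worst case $n = 4t_0 + 1$. A second delicate point is the round-advancement quorum of Step 5: the threshold $> n - t_0$ is only barely met by the $n - t$ non-byzantine players, so I would justify it through the transferability of the echoed certificate (or read the threshold as $\ge n - t_0$). In either case this affects only the liveness of the view change and not the Consistency statement itself, which is already secured at Step 4.
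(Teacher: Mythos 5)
Your overall architecture (quorum counting for safety, certificate echoing for totality) parallels the paper's, and your totality and robustness arguments are close in spirit to the paper's, but the safety half of Consistency has a genuine gap: you bound the quorum intersection against $t$ when it must be bounded against $k+t$. Your intersection of size $n - 2t_0 > t$ only guarantees a \emph{non-byzantine} player in both the agreement quorum $Q_a$ and the view-change quorum $Q_v$, and you then silently upgrade ``non-byzantine'' to ``honest.'' But rational players may collude with $T$ (the threat model explicitly allows a collusion set $\subseteq K \cup T$), and contributing both an agreement-supporting signature and a view-change signature in the same round is \emph{not} double-signing two conflicting values in the same phase, so it yields no Proof-of-Fraud and incurs no penalty. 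Nor can you appeal to Lemma~\ref{lemma:rational-honest} (rational players play $\pi_{0}$) to dismiss this behaviour: that lemma is proved downstream of this claim (its proof rests on Claim~\ref{claim:agreement-byz}, which in turn invokes Claim~\ref{claim:view-change-properties}), so invoking it here would be circular. This is exactly why the paper's own proof counts the whole coalition on both sides: it takes disjoint honest sets $A, B \subset H$ with $|A| + k + t \geq n - t_0$ (the view-change certificate) and $|B| + k + t \geq n - t_0$ (agreement at the other honest player), adds the two, and derives $k + t + 2t_0 > n$, contradicting $k + t < \frac{n}{2}$ and $t_0 < \frac{n}{4}$.

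The repair is cheap and stays inside your framework: the same arithmetic gives $|Q_a \cap Q_v| \geq n - 2t_0$, and $n - 2t_0 > k + t$ holds precisely because $k + t < \frac{n}{2}$ and $2t_0 < \frac{n}{2}$, so the intersection must contain a genuinely honest player (your step asserting that an honest player never supports both quorums then mirrors the paper's implicit assumption that $A \cap B = \emptyset$, so on that point you are on equal footing with the paper). Note this also corrects your closing diagnosis: the binding constraint is not ``$n - 2t_0 > t$, forcing $t < n/4$'' but $k + t + 2t_0 < n$, which is where the two threat-model bounds enter jointly. A smaller instance of the same slip appears in your Robustness argument, where you assume every non-byzantine player responds on time and never triggers view-change; rational compliance cannot be presupposed there either, and the paper instead justifies it through the players' type ($\theta = 1$ players are disincentivized from liveness attacks), which is the justification your counting needs.
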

\begin{proof}
    We prove Consistency through contradiction. Consider $\mathcal{P}_{1} \in A$, $\mathcal{P}_{2} \in B$ such that $A,B \subset H, A \cap B = \emptyset$. Assume consistency does not hold, i.e. $\mathcal{P}_{1}$ commits to view-change and broadcasts $CommitView$ message (ref. Table~\ref{tab:messages}). Since communication channels are reliable and messages are not dropped, the only way to disrupt consistency is if $\mathcal{P}_{2}$ reaches agreement before this $CommitView$ message from $\mathcal{P}_{1}$ reaches $\mathcal{P}_{2}$. For this, we require $|B| + k + t \geq n - t_{0}$. However, for a valid $CommitView$ we require $|A| + k + t \geq n - t_{0}$. Adding them up, $|A| + |B| + 2(k + t) > 2n - 2t_{0}$. This gives us $n + k + t > |A| + |B| + 2(k + t) \geq 2n - 2t_{0}$ and on rearrangement $k + t + 2t_{0} > n$ which is a contradiction for the considered threat model $\mathcal{M} = \langle(\mathcal{P},T,K),\theta=1,\frac{n}{4}\rangle$. Therefore, consistency is satisfied.

    For robustness, consider in a round with honest leader $\mathcal{P}_{l} \in H$. In this case, if $\mathcal{P}_{i} \in T$ broadcast $ViewChange$ message, and abstain from participation, still the protocol is able to gather $\geq n - t_{0}$ message. This is because (1) $t \leq t_{0}$ and (2) rational players are of type $\theta = 1$ due to which they are disincentivized from causing liveness attack. Hence, not enough ($\geq n - t_{0}$) view change messages are gathered and $T$ cannot cause view-change by themselves. The protocol therefore satisfies Robustness property.  
\end{proof}

\section{Proof for Lemma~\ref{lemma:rational-honest}}
\label{app:rational-honest}
\begin{lemma}
    For any $\mathcal{P}_{i} \in K$ under threat model $\mathcal{M} = \langle(\mathcal{P},K,T),\theta=1,\lceil\frac{n}{4}\rceil - 1\rangle$ and protocol \proname, following the protocol honestly (i.e. strategy $\pi_{0}$) is dominant strategy incentive compatible (DSIC) for $|K| + |T| < \frac{n}{2}$ and $t < t_{0}$. 
    \[
        U_{i}(\pi_{0},1) \geq U_{i}(\pi,1) \hspace{10pt}\forall\;\pi,\;\forall\;\mathcal{P}_{i} \in K
    \]
\end{lemma}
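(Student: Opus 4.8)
The plan is to prove the dominant-strategy claim by fixing an \emph{arbitrary} profile of strategies for every player other than $\mathcal{P}_i$ --- in particular letting the entire collusion $K\cup T\setminus\{\mathcal{P}_i\}$ behave adversarially --- and showing that $\mathcal{P}_i$'s payoff under $\pi_0$ weakly dominates its payoff under any alternative $\pi$. Since $U_i(\pi,1)=\sum_r \delta^r u_i(\pi,1,r)$ is a sum with nonnegative weights $\delta^r$, and since any strategy is, in each round and phase, a choice among the atomic actions $\{\pi_0,\pi_{abs},\pi_{ds}\}$, it suffices to establish a \emph{stagewise} domination: in every round, whatever the realised history and the opponents' contemporaneous actions, $\pi_0$ maximises $u_i(\cdot,1,r)$. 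Crucially, because the domination I aim for is unconditional (a dominant action, not merely a best response to honest opponents), the history-dependent punishments that created the alternate grim-trigger equilibrium of Theorem~\ref{thm:trap-impossible} cannot help here: there is no channel through which the opponents can reward $\mathcal{P}_i$ for a past deviation.

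First I would dispose of $\pi_{abs}$. Abstaining is indistinguishable from a crash, so no Proof-of-Fraud can be levied and $D(\pi_{abs},\sigma)=0$; but withholding messages only \emph{decreases} the signatures any set can assemble, so it can never move the system into $\sigma_{Fork}$, the unique state with $f(\cdot,1)=\alpha>0$ for a type-$\theta=1$ player (Table~\ref{tab:func1}). At best $\pi_{abs}$ leaves the round's state unchanged relative to $\pi_0$, and it can only push it toward $\sigma_{NP}$, where $f(\sigma_{NP},1)=-\alpha<0$. Hence $u_i(\pi_{abs},1,r)\le u_i(\pi_0,1,r)$ in every round.

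The heart of the argument is $\pi_{ds}$, the only action capable of producing $\sigma_{Fork}$. Here I would invoke the counting behind Claim~\ref{claim:agreement-byz}: for two honest players to confirm conflicting blocks, their two supporting sets of size $\ge n-t_0$ must overlap in at least $n-2t_0$ players, and since $t_0=\lceil n/4\rceil-1$ gives $n-2t_0>t_0$, every overlap player must have double-signed. Thus any confirmed fork forces strictly more than $t_0$ double-signatures to appear inside the commit/reveal proofs that are broadcast in the Reveal phase; once the network synchronises (GST) every honest player assembles them and, by accountability (Definition~\ref{def:accountability}), identifies all guilty signers. Consequently, if $\mathcal{P}_i$'s extra signature is actually \emph{used} to help confirm the conflicting block, then $\mathcal{P}_i$ is among the exposed players, so $D(\pi_{ds},\cdot)=1$ and $\mathcal{P}_i$ forfeits $L$; its net round payoff is at most $\alpha-L$, which is strictly negative once the collateral satisfies $L>\alpha$, hence worse than the $0$ secured by $\pi_0$ in $\sigma_0$. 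If instead $\mathcal{P}_i$'s conflicting signature is never used on both sides, then the realised state equals the one produced when $\mathcal{P}_i$ plays $\pi_0$, so $f$ is unchanged while the stray signature can only expose $\mathcal{P}_i$ to the same penalty; either way $\pi_0$ is weakly better.

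Combining the cases shows no single-round deviation raises $\mathcal{P}_i$'s payoff against any fixed opponent profile, and summing against $\delta^r\ge0$ yields $U_i(\pi_0,1)\ge U_i(\pi,1)$ for all $\pi$, i.e.\ DSIC. I expect the delicate step to be the $\pi_{ds}$ analysis: one must argue that a double-signature which is \emph{pivotal} for a confirmed fork is always captured as Proof-of-Fraud, and this is exactly where $t<n/4$ (forcing the overlap $n-2t_0$ to exceed $t_0$) and the reveal-phase broadcast of the commit proofs are both essential, together with the partial-synchrony fact that all such proofs reach honest players after GST so the penalty is genuinely levied. I would also record the free-riding subtlety --- a rational player might hope to let others create the fork and collect $\alpha$ penalty-free --- but the same overlap count gives $|K\cup T|=k+t\le 2t_0<n-2t_0$, so the collusion can never reach the fork threshold without $\mathcal{P}_i$; the opportunity never arises, and honest play remains optimal.
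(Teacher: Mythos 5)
Your proposal is correct and follows essentially the same route as the paper's proof: both reduce the analysis to the double-signing deviation, use the quorum-intersection counting with $t_{0} = \lceil n/4\rceil - 1$ (so $n - 2t_{0} > t_{0}$ and $k + t + 2t_{0} < n$) to show that a confirmed fork would require more than $t_{0}$ double-signers --- impossible when only $K \cup T$ deviate --- and invoke the Proof-of-Fraud penalty to make any stray double-signature weakly worse than honest play, with abstention handled by noting it can only push the system toward $\sigma_{NP}$. One minor remark: the condition $L > \alpha$ you introduce (and the intermediate inequality $k + t \leq 2t_{0}$, which can fail by one) is unnecessary --- your own final counting shows the confirmed-fork case is vacuous since $k + t < n/2 < n - 2t_{0}$, which is exactly how the paper's proof avoids any assumption on the collateral size.
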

\begin{proof}
    Consider any arbitrary rational player $\mathcal{P}_{i} \in K$. We show that by playing $\pi_{fork}$, $\mathcal{P}_{i}$ either: (1) get caught in the PoF, (2) cause view-change or (3) cause agreement on a single value. Consider $\pi_{fork}$ is played by $\mathcal{P}_{i}$ and two honest players $\mathcal{P}_{a}, \mathcal{P}_{b} \in H$ receive conflicting signatures on value $h_{a}, h_{b}$ (such that $h_{a} \not= h_{b}$). First, consider the case when network is synchronous (messages are reaching on time). In this case, signature on $h_{a}$ reaches to $\mathcal{P}_{b}$ and similarly signature on $h_{b}$ reaches $\mathcal{P}_{a}$. Either number of double signatures are $\leq t_{0}$ in which case, agreement is satisfied due to Claim~\ref{claim:agreement-byz}. If number of double signature is $> t_{0}$ then PoD is constructed by either (or both) $\mathcal{P}_{a}$ or (and) $\mathcal{P}_{b}$ and $\mathcal{P}_{i}$ suffers penalty (with some non-zero probability). The payoff in this case is $u_{i}(\pi_{fork},1,r) = - L\cdot D(\pi_{fork},\sigma) < 0$. 
    Consider if the network is partially-synchronous and the network partition of honest nodes is such that $A,B$ are two partitions and $\mathcal{P}_{a} \in A$ and $\mathcal{P}_{b} \in B$. If either partition is small enough that $k + t + |A| < n - t_{0}$ or $k + t + |B| < n - t_{0}$, then agreement is not reached in that partition. In this case, either agreement is not reached for both $\mathcal{P}_{a}$ and $\mathcal{P}_{b}$ or agreement is reached for exactly one of $\mathcal{P}_{a}$ or $\mathcal{P}_{b}$ (therefore on one same value). Let the probability of no agreement be $q_{d}$ and agreement on exactly one value be $q_{a}$. The utility is therefore, $u_{i}(\pi_{fork},1,r) = q_{d}f(\sigma_{NP},1) = -q_{d}\alpha \leq 0$. 
    Notice that it cannot be that agreement is reached in both partitions, as in that case, $|A| + k + t > n - t_{0}$ and $|B| + k + t > n - t_{0}$. As $ |A| + |B| < |H|$, Adding them up, we have $2n - 2t_{0}  \leq n - (k + t) + 2(k + t) \Rightarrow n < k + t + 2t_{0}$.

    However, according to our threat model $\mathcal{M}$, $k + t < \frac{n}{2}$ and $t_{0} < \frac{n}{4}$. Therefore, $k + t + 2t_{0} < n$. Therefore, such a case is not possible in our threat model. Therefore, if the network in (or during) that round is synchronous, $u_{i}(\pi_{fork},1,r) < 0$ and under partially-synchronous network $u_{i}(\pi_{fork},1,r) \leq 0$. Therefore, the expected utility for the round is $\forall i \in [n], r \in \mathbb{R}$ 
    \[
    \begin{aligned}
        \mathbb{E}_{\sigma \sim S}[u_{i}(\pi_{fork},1,r)] \leq 0
        \implies U_{i}(\pi_{fork},1) \leq 0 = U_{i}(\pi_{0},1)
    \end{aligned}
    \]
    Following any other strategy will lead to at most $t_{0}$ double signatures (from byzantine players) and from Claim~\ref{claim:agreement-byz}, the protocol will either reach an agreement or view-change (through timeout or duplicate values proposed by the adversarial leader). Additionally, since rational players are of type $\theta = 1$, they will not try to cause censorship of transactions. Therefore, following $\pi \not\in \{\pi_{0},\pi_{fork}\}$ for $\mathcal{P}_{i} \in K$ gives
    \[
        U_{i}(\pi,1) \leq 0 = U_{i}(\pi_{0},1)
    \]
    Hence, following $\pi_{0}$ gives more payoff than any other strategy. Thus following \proname\ is DSIC for any  $\mathcal{P}_{i} \in K$.
\end{proof}

\section{Construct Proof Procedure}
\label{app:construct-proof}

We elaborate the construct proof procedure invoked in the Reveal phase of the \proname\ protocol (Figure~\ref{fig:protocol}) through Figure~\ref{fig:pof-protocol}.
\begin{figure}[!th]
    \centering
    \begin{mybox}
    {\textsf{ConstructProof}$(M,t_{0})$}
    \begin{algorithmic}[1]
    \STATE $D := \emptyset$
    \FOR{$i \in [n]$, $j \in [n]/\{i\}$}
    
    \FOR{$k \in [n]$}
    \IF{$M(i,k) \not= M(j,k)$}
    \STATE $D \gets D \cup \{(M(i,k),M(j,k)\}$ and \textbf{goto} $9$
    \ENDIF
    \ENDFOR
    \IF{$|D| \geq t_{0} + 1$}
    \STATE \textbf{return} $D$
    \ENDIF
    \ENDFOR
    \STATE \textbf{return} $D$
    \end{algorithmic}
    \end{mybox}
    \caption{Construction of Proof-of-Fraud (PoF)}
    \label{fig:pof-protocol}
\end{figure}


\end{document}